\DeclarePairedDelimiter\ceil{\lceil}{\rceil}
\DeclarePairedDelimiter\floor{\lfloor}{\rfloor}
\newtheorem{theorem}{Theorem}
\newtheorem{example}{Example}
\newtheorem{lemma}{Lemma}
\newtheorem{remark}{Remark}
\theoremstyle{definition}
\newenvironment{sproof}{\noindent{ \emph{ Sketch of proof:}}}{\qed\bigskip}
\begin{document}
	\newgeometry{left=0.7in,right=0.7in,top=.7in,bottom=1in}
	\title{Private Variable-Length Coding with Sequential Encoder}
\vspace{-5mm}
\author{
		\IEEEauthorblockN{Amirreza Zamani$^\dagger$, Tobias J. Oechtering$^\dagger$, Deniz G\"{u}nd\"{u}z$^\ddagger$, Mikael Skoglund$^\dagger$ \vspace*{0.5em}
			\IEEEauthorblockA{\\
                              $^\dagger$Division of Information Science and Engineering, KTH Royal Institute of Technology \\
                              $^\ddagger$Dept. of Electrical and Electronic Engineering, Imperial College London\\
				Email: \protect amizam@kth.se, oech@kth.se, d.gunduz@imperial.ac.uk, skoglund@kth.se }}
		}
	\maketitle
%
\begin{abstract} 
	A multi-user private data compression problem is studied. A server has access to a database of $N$ files, $(Y_1,...,Y_N)$, each of size $F$ bits and is connected 
	to an encoder. The encoder is connected
	through an unsecured link to a user. 
	We assume that each file $Y_i$ is arbitrarily correlated with a private attribute $X$, which is assumed to be accessible by the encoder.
	Moreover, an adversary is assumed to have access to the link. The users and the encoder have access to a shared secret key $W$. We assume that at each time the user asks for a file $Y_{d_i}$, where $(d_1,\ldots,d_K)$ corresponds to the demand vector.  
	The goal is to design the delivered message $\mathcal {C}=(\mathcal {C}_1,\ldots,\mathcal {C}_K)$ after the user send his demands to the encoder
	such that the average length of $\mathcal{C}$ is minimized, while satisfying: 
i. The message $\cal C$ does not reveal any information about $X$, i.e., $X$ and $\mathcal{C}$ are independent, which corresponds to the perfect privacy constraint; ii. The user is able to decode its demands, $Y_{d_i}$, by using $\cal C$, and the shared key $W$.
Here, the encoder sequentially encode each demand $Y_{d_i}$ at time $i$, using the shared key and previous encoded messages.  
 We propose a variable-length coding scheme that uses privacy-aware compression techniques. We study proposed upper and lower bounds on the average length of $\mathcal{C}$ in an example. Finally, we study an application considering cache-aided networks. 
\end{abstract}
\section{Introduction}
 We consider the scenario illustrated in Fig. \ref{wii}, in which a server has access to a database consisting of $N$ files $Y_1,\ldots,Y_N$, where each file, of size $F$ bits, is sampled from the joint distribution $P_{XY_1\cdot Y_N}$, where $X$ denotes the private latent variable, whose realization is known to the server. The user requests $K\leq N$ files from the server sequentially, where $d_i\in[N]\triangleq\{1,\ldots,N\}$ represents the user request at time $i$, $d_i\neq d_j$ for $i\neq j$. We assume that the server delivers the user's request over an unsecured link, but we assume that the two have access to a shared secret key denoted by $W$, of size $T$. 
 In this work, the user send his demands to the encoder, where $d_i\in[N]\triangleq\{1,\ldots,N\}$ denotes the demand of the user at time $i$. We assume that the user asks for $K$ files where $K\leq N$ and $d_i\neq d_j$ for $i\neq j$.
 At time slot $i$, the encoder receives $d_i$ and designs a message $\mathcal{C}_i$ using $W$ and the previous messages it has delivered $(\mathcal{C}_1,\mathcal{C}_2,\ldots,\mathcal{C}_{i-1})$ to satisfy $d_i$. We note that the encoder is not aware of the future demands. 
 
 We assume that the delivery channel is public, and an adversary can use the delivered messages $\mathcal{C}=(\mathcal{C}_1,\mathcal{C}_2,\ldots,\mathcal{C}_K)$ to extract information about $X$. 
 The goal of the private delivery problem is to find a response $\mathcal{C}$ with minimum average length that satisfies user's demands while guaranteeing a certain privacy requirement. 
 Here, we consider the worst case demand combinations $d=(d_1,..,d_K)$ to construct $\cal C$, and the expectation is taken over the randomness in the database. In this work, we impose a perfect privacy constraint, i.e., we require $\mathcal {C}$ to be independent of $X$. Let $\hat{Y}_{d_i}$ denote the decoded message of the user at time slot $i$ using $W$ and $(\mathcal{C}_1,\mathcal{C}_2,\ldots,\mathcal{C}_{i})$. The user should be able to recover $Y_{d_k}$ reliably, i.e., $\mathbb{P}{\{\hat{Y}_{d_k}\neq Y_{d_k}\}}=0$, $\forall k\in[K]$. 
 
 We have a variable-length compression problem with a perfect privacy constraint. To solve this problem  
  we combine techniques used in privacy mechanisms \cite{king1} and data compression \cite{kostala}. 
\begin{figure}[]
	\centering
	\includegraphics[scale = .1]{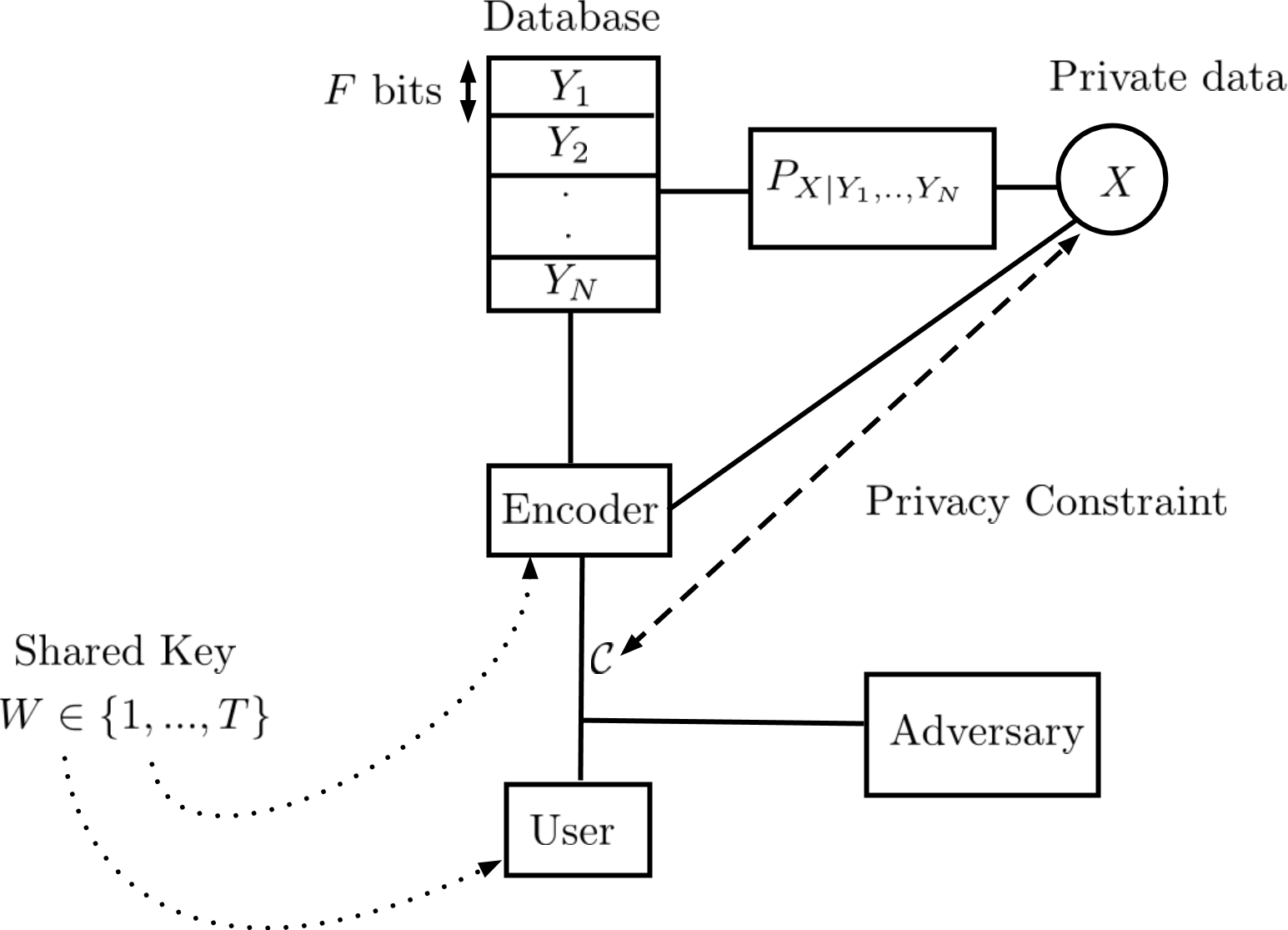}
	\caption{A server sequentially delivers requests of a user from its database over an unsecured public channel.}
	\label{wii}
\end{figure}
Privacy mechanisms and compression problems have received increasing attention in recent years
\cite{shannon, denizjadid2, gunduz2010source, schaefer, sankar, yamamoto1988rate, yeye,yaya, makhdoumi, cufff, borz, khodam, Khodam22,kostala, kostala2, calmon4, asoo, issa2, king1, king2, zamani2023cache,asoodeh1,king3,deniz3}. 
A notion of perfect secrecy was introduced in \cite{shannon} by Shannon, where the public and private data are statistically independent. 
Equivocation as a measure of information leakage for information theoretic security has been used in \cite{denizjadid2,gunduz2010source, schaefer, sankar}. A rate-distortion approach to information theoretic secrecy is studied in \cite{yamamoto1988rate}. 
Lossless data compression with secrecy constraints are studied in \cite{yeye,yaya}.
The concept of privacy funnel is introduced in \cite{makhdoumi}, where the privacy-utility trade-off has been studied considering the log-loss as a privacy measure as well as a distortion measure for utility. 
In \cite{borz}, the problem of privacy-utility trade-off considering mutual information both as measures of utility and privacy is studied. It is shown that under the perfect privacy assumption, the optimal privacy mechanism can be obtained as the solution of a linear program. 
The result in \cite{borz} is generalized to some small bounded leakage in \cite{khodam}, and 
to a non-invertible leakage matrix in \cite{Khodam22}.
In \cite{kostala}, an approach to partial secrecy called \emph{secrecy by design} is introduced, and applied to two problems: privacy mechanism design and lossless compression. For the privacy design problem, bounds on privacy-utility trade-off are derived by using the Functional Representation Lemma (FRL). These results are derived under the perfect secrecy assumption.
In \cite{king1}, the privacy problems considered in \cite{kostala} are generalized by relaxing the perfect secrecy requirement. 
In \cite{king2}, the privacy-utility trade-off with two different per-letter privacy constraints is studied. 
Moreover, in \cite{kostala}, both fixed-length and variable-length compression problems have been studied, and upper and lower bounds on the average length of the encoded message have been derived. These results are derived under the assumption that the private data is independent of the encoded message.
A similar compression problem in cache-aided networks is studied in \cite{zamani2023cache} in the presence of an adversary.  
In this paper, in contrast to \cite{zamani2023cache}, we consider a sequential encoder, and
we use
variable-length lossless compression techniques as in \cite{kostala} to find an alternative solution to data delivery in the presence of an adversary. Our key contribution is a multi-part code construction based on an extension of FRL combined with a one-time-pad scheme. 

\section{system model and Problem Formulation} \label{sec:system}
Let $Y_i$ denote the $i$-th file in the database, which is of size $F$ bits, i.e., $\mathcal{Y}_i\in\{1,\ldots,2^F\}$ and $|\mathcal{Y}_i|=2^F$. 
Let the discrete random variable (RV) $X$ defined on the finite alphabet $\cal{X}$ describe the private latent variable and be arbitrarily correlated with the files in the database $Y=(Y_1,\ldots,Y_N)$ where $|\mathcal{Y}|= |\mathcal{Y}_1|\times\ldots\times|\mathcal{Y}_N|=(2^F)^N$ and $\mathcal{Y}= \mathcal{Y}_1\times\ldots\times\mathcal{Y}_N$. 
We denote the joint distribution of the private data and the database by $P_{XY_1\cdot Y_N}$ and marginal distributions of $X$ and $Y_i$ by vectors $P_X$ and $P_{Y_i}$ defined on $\mathbb{R}^{|\mathcal{X}|}$ and $\mathbb{R}^{2^F}$. 
The relation between $X$ and $Y$ is given by the matrix $P_{X|Y_1\cdot Y_N}$ defined on $\mathbb{R}^{|\mathcal{X}|\times(2^F)^N}$. 
The shared secret key is denoted by the discrete RV $W$ defined on $\{1,\ldots,T\}$, and is assumed to be known by the encoder and the users, but not the adversary. Furthermore, we assume that $W$ is uniformly distributed and is independent of $X$ and $Y$. 
Let the vector $(Y_{d_1},\ldots,Y_{d_K})$ denote the demands of the user, where ${d_i}$ is sent at time $i$ and $(d_1,\ldots,d_K)\in[N]^K$.  
A variable-length prefix-free code with a shared secret key of size $T$ consists of mappings:
\begin{align*}
&(\text{encoder})\! \ \mathcal{C}_1: [T]\!\times\![N]\! \rightarrow\! \{0,\!1\}^*\\
&(\text{encoder})\! \ \mathcal{C}_i: \{0,\!1\}^*\!\times\!\! [T]\!\times\!\![N]^i\! \rightarrow\! \{0,\!1\}^*,\ \ i\geq2,\\
&(\text{decoder}) \mathcal{D}_i\!: \! \{0,1\}^*\!\times\! [T]\!\times\! [N]^i\!\rightarrow\! 2^F\!,\ \! \forall i\!\in\![K].
\end{align*}
The output of the encoder in the $i$-th time slot $\mathcal{C}(Y_{d_i},d_1,\ldots,d_i,\mathcal{C}_1,\ldots, \mathcal{C}_{i-1})$ is the codeword the server sends over the link to satisfy the demands of the user $(Y_{d_1},\ldots,Y_{d_i})$. 
At the user side, the user employs the decoding function $\mathcal{D}_i$ in the $i$-th time slot to recover its demand $Y_{d_i}$, i.e., $\hat{Y}_{d_i}=\mathcal{D}_i(W,d_1,\ldots,d_i,\mathcal{C}_1,\ldots, \mathcal{C}_{i-1})$.
Since the code is prefix free, no codeword in the image of $\cal C$ is a prefix of another codeword. The variable-length code $(\mathcal{C}_1,\ldots,\mathcal{C}_K,\mathcal{D}_1,\ldots,\mathcal{D}_K)$ is lossless if for all $k\in[K]$ we have
\begin{align}\label{choon}
\mathbb{P}(\mathcal{D}_i(W,d_1,\ldots,d_i,\mathcal{C}_1,\ldots, \mathcal{C}_{i-1})\!=\!Y_{d_i})\!=\!1.
\end{align} 
In the following, we define perfectly private codes. 
The code $(\mathcal{C}_1,\ldots,\mathcal{C}_K,\mathcal{D}_1,\ldots,\mathcal{D}_K)=(\mathcal{C},\mathcal{D}_1,\ldots,\mathcal{D}_K)$ is \textit{perfectly private} if
\begin{align}
I(\mathcal{C};X)=0.\label{lashwi}
\end{align}
Let $\xi$ be the support of $\mathcal{C}$, where $\xi\subseteq \{0,1\}^*$. For any $c\in\xi$, let $\mathbb{L}(c)$ be the length of the codeword. The lossless code $(\mathcal{C},\mathcal{D}_1,\ldots,\mathcal{D}_K)$ is \textit{$(\alpha,T,d_1,\ldots,d_K)$-variable-length} if 
\begin{align}\label{jojowi}
\mathbb{E}(\mathbb{L}(\mathcal{C}))\!\leq\! \alpha,\ \forall w\!\in\!\![T]\ \text{and}\ \forall d_1,\ldots,d_K,
\end{align} 
and $(\mathcal{C},\mathcal{D}_1,\ldots,\mathcal{D}_K)$ satisfies \eqref{choon}.
Finally, let us define the set $\mathcal{H}(\alpha,T,d_1,\ldots,d_K)$ as follows:\\ 
$\mathcal{H}(\alpha,T,d_1,\ldots,d_K)\triangleq\{(\mathcal{C},\mathcal{D}_1,\ldots,\mathcal{D}_K): (\mathcal{C},\mathcal{D}_1,\ldots,\mathcal{D}_K)\ \text{is}\ \text{perfectly-private and}\\ (\alpha,T,d_1,\ldots,d_K)\text{-variable-length}  \}$. 
The private compression design problem with sequential encoder can be stated as follows
\begin{align}
\mathbb{L}(P_{XY_1\cdot Y_N},T)&=\!\!\!\!\!\inf_{\begin{array}{c} 
	\substack{(\mathcal{C},\mathcal{D}_1,\ldots,\mathcal{D}_K)\in\mathcal{H}(\alpha,T,d_1,\ldots,d_K)}
	\end{array}}\alpha.\label{main1wi}
\end{align} 
\begin{remark}
	\normalfont 
	Letting $N=K=1$, \eqref{main1wi} leads to the privacy-compression rate trade-off studied in \cite{kostala}, where upper and lower bounds have been derived.
\end{remark}
\begin{remark}
	In this paper, to design a code, we consider the worst case demand combinations. This follows since \eqref{jojowi} must hold for all possible combinations of the demands.
\end{remark}

 \section{Main Results}\label{sec:resul}
In this section, we derive an upper bound on $\mathbb{L}(P_{XY_1\cdot Y_N},T)$ 
defined in \eqref{main1wi}. 
For this, we employ the multi-part code construction, which is similar to the two-part code used in \cite{kostala}. We first encode the private data $X$ using a one-time-pad \cite[Lemma~1]{kostala2}, then sequentially encode the user's demands using an extension of the Functional Representation Lemma. To do so, let us, first recall FRL.
\begin{lemma}\label{FRL}(FRL \cite[Lemma~1]{kostala}):
	For any pair of RVs $(X,Y)$ distributed according to $P_{XY}$ supported on alphabets $\mathcal{X}$ and $\mathcal{Y}$, respectively, where $|\mathcal{X}|$ is finite and $|\mathcal{Y}|$ is finite or countably infinite, there exists a RV $U$ supported on $\mathcal{U}$ such that $X$ and $U$ are independent, i.e., 
	$
	I(U;X)= 0,
	$
	$Y$ is a deterministic function of $U$ and $X$, i.e., 
	$
	H(Y|U,X)=0,
	$
	and 
	$
	|\mathcal{U}|\leq |\mathcal{X}|(|\mathcal{Y}|-1)+1.
	$
	Furthermore, if $X$ is a deterministic function of $Y$, we have
	$
	|\mathcal{U}|\leq |\mathcal{Y}|-|\mathcal{X}|+1.
	$
\end{lemma}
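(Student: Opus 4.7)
The plan is to recast the lemma as a coupling problem and then solve it constructively. The first step is to look for a distribution $P_U$ on an alphabet $\mathcal{U}$ together with a family of maps $\{f_x : \mathcal{U} \to \mathcal{Y}\}_{x\in\mathcal{X}}$ such that the pushforward $f_x\#P_U$ coincides with $P_{Y|X=x}$ for every $x$. Given such a coupling, drawing $U \sim P_U$ independently of $X$ and setting $Y = f(U,X) := f_X(U)$ immediately yields $I(U;X)=0$, $H(Y|U,X)=0$, and the correct joint law $P_{XY}$, so the entire task reduces to building the coupling.

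For the construction I would use a greedy ``peeling'' procedure on the $|\mathcal{X}|\times|\mathcal{Y}|$ table of conditional probabilities $P_{Y|X=x}(y)$, whose rows sum to one. For each row $x$ I maintain a pointer $y(x)$ into the positive support of that row and a residual $r(x)$ equal to the unclaimed mass of the current cell. At each iteration I emit a fresh symbol $u$ with $P_U(u):=\min_x r(x)$, declare $f_x(u):=y(x)$ for every $x$, subtract $P_U(u)$ from every residual, and advance any pointer whose residual has just vanished. The procedure terminates when every row has been exhausted; the emitted weights sum to one and each $f_x$ reproduces $P_{Y|X=x}$ by construction.

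I expect the cardinality count to be the real obstacle. The naive bound ``each iteration depletes at least one cell'' only gives $|\mathcal{U}| \le |\mathcal{X}||\mathcal{Y}|$, so I will need one extra observation to save $|\mathcal{X}|-1$ iterations: the invariant that every row's total remaining mass decreases by exactly $P_U(u)$ at every iteration. This forces every row to reach zero mass at the same terminal iteration, which therefore depletes every row's final cell at once. If $n_x$ denotes the number of columns with $P_{Y|X=x}(y)>0$, the total number of cell-depletion events is $\sum_x n_x$; subtracting the $|\mathcal{X}|$ events bundled into the last iteration then gives $|\mathcal{U}| \le \sum_x n_x - |\mathcal{X}| + 1 \le |\mathcal{X}|(|\mathcal{Y}|-1)+1$.

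The sharpened bound under the hypothesis that $X$ is a deterministic function of $Y$ follows from the very same construction: the supports $\mathcal{Y}_x=g^{-1}(x)$ are then disjoint and cover $\mathcal{Y}$, so $\sum_x n_x = |\mathcal{Y}|$, and plugging this identity into the previous count collapses the bound to $|\mathcal{U}| \le |\mathcal{Y}|-|\mathcal{X}|+1$. No new idea is required beyond substituting this identity into the previous calculation.
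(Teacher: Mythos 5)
Your coupling-plus-greedy-peeling construction is exactly the standard constructive proof of the FRL that the paper relies on (the paper itself only cites the lemma from \cite{kostala} and gives no proof); the key observation that every row's remaining mass decreases by the same amount per step, so all rows exhaust at the same final iteration and you save $|\mathcal{X}|-1$ symbols, together with its specialization to disjoint conditional supports when $X$ is a function of $Y$, is correct and yields both cardinality bounds. Two cosmetic points only: $\sum_x n_x$ may be strictly less than $|\mathcal{Y}|$ when some $y$ has zero mass (the inequality still goes the right way), and for countably infinite $\mathcal{Y}$ the procedure does not terminate but its limit still defines a valid countable $U$ while the cardinality bound is vacuous.
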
  
The proof of Lemma~\ref{FRL} is constructive and is useful to obtain the next lemma.
Next, we provide an extension of FRL 
that helps us find upper bound on $\mathbb{L}(P_{XY_1\cdot Y_N},T)$. 

\begin{lemma}\label{chaghal}
	For a fixed integer $k\geq 1$, let RVs $(X,Y,U_1,\ldots U_k)$ be distributed according to $P_{XYU_1\cdot U_k}$ where $I(X;U_1,\ldots U_k)=0$. Then, there exists a RV $U_{k+1}$ such that $X$ and $(U_1,\ldots,U_{k+1})$ are independent, i.e., 
	\begin{align}\label{t12}
	I(U_1,\ldots,U_{k+1};X)= 0,
	\end{align}
	$Y$ is a deterministic function of $(U_1,\ldots,U_{k+1})$ and $X$, i.e., 
	\begin{align}\label{t22}
	H(Y|U_1,\ldots,U_{k+1},X)=0,
	\end{align}
	and 
	\begin{align}\label{pr2}
	|\mathcal{U}_{k+1}|\leq |\mathcal{X}||\mathcal{U}_1|\cdot|\mathcal{U}_{k}|(|\mathcal{Y}|-1)+1.
	\end{align}
\end{lemma}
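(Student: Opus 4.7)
The plan is to reduce this to the original Functional Representation Lemma by bundling the conditioning variables. Specifically, I would introduce the auxiliary random variable $Z \triangleq (X, U_1, \ldots, U_k)$ taking values in $\mathcal{X} \times \mathcal{U}_1 \times \cdots \times \mathcal{U}_k$, and then apply Lemma~\ref{FRL} to the pair $(Z, Y)$. This yields a random variable $U_{k+1}$ satisfying $I(U_{k+1}; Z) = 0$, $H(Y \mid Z, U_{k+1}) = 0$, and $|\mathcal{U}_{k+1}| \leq |\mathcal{Z}|(|\mathcal{Y}| - 1) + 1 = |\mathcal{X}||\mathcal{U}_1|\cdots|\mathcal{U}_k|(|\mathcal{Y}|-1)+1$, which is exactly the desired cardinality bound \eqref{pr2}. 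The deterministic-function property \eqref{t22} is immediate since $X$ and $U_1,\ldots,U_k$ are all components of $Z$.

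The only nontrivial step is verifying the joint independence claim \eqref{t12}. I would proceed via the chain rule for mutual information:
\begin{align*}
I(X; U_1, \ldots, U_{k+1}) = I(X; U_1, \ldots, U_k) + I(X; U_{k+1} \mid U_1, \ldots, U_k).
\end{align*}
The first term vanishes by hypothesis. For the second term, I would expand $I(U_{k+1}; Z) = I(U_{k+1}; X, U_1, \ldots, U_k)$ using the chain rule as
\begin{align*}
I(U_{k+1}; X, U_1, \ldots, U_k) = I(U_{k+1}; U_1, \ldots, U_k) + I(U_{k+1}; X \mid U_1, \ldots, U_k),
\end{align*}
and since this equals zero and both summands are nonnegative, each must vanish individually. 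In particular $I(U_{k+1}; X \mid U_1, \ldots, U_k) = 0$, which together with the chain rule above gives \eqref{t12}.

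I do not expect a real obstacle here: the lemma is essentially a repackaging of FRL. The only subtle point is recognizing that independence of $U_{k+1}$ from the \emph{aggregated} variable $Z$ is strictly stronger than separate pairwise independence, and this stronger form is exactly what is needed to obtain joint independence of $(U_1, \ldots, U_{k+1})$ from $X$ via the chain rule. The bound on $|\mathcal{U}_{k+1}|$ is then just the cardinality bound of FRL applied to a random variable on a product alphabet.
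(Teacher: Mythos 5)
Your proposal is correct and is essentially identical to the paper's own proof: both apply Lemma~\ref{FRL} with $X\leftarrow(X,U_1,\ldots,U_k)$ and $Y\leftarrow Y$, then use the chain rule $I(X;U_1,\ldots,U_{k+1})=I(X;U_1,\ldots,U_k)+I(X;U_{k+1}\mid U_1,\ldots,U_k)$ to conclude joint independence. Your extra step of splitting $I(U_{k+1};X,U_1,\ldots,U_k)=0$ into nonnegative summands is a slightly more explicit justification of the second term vanishing, but the argument is the same.
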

\begin{proof}
	Let $U_1^{i}\triangleq(U_1,\ldots,U_i)$ and let $U_{k+1}$ be produced based on FRL using $X\leftarrow (X,U_1,\ldots,U_k)$ and $Y\leftarrow Y$. We have
	\begin{align*}
	I(X;U_1^{k+1})=I(X;U_1^{k})+I(X;U_{k+1}|U_1^k)=0,
	\end{align*}
	where the last line follows since the first term is zero by the assumption and the second term is zero due to independence of $U_{k+1}$ and $(X,U_1,\ldots,U_k)$.
\end{proof}
Next, we present our achievable scheme which leads to an upper bound on \eqref{main1wi}. Let $m_1=\min_{\begin{array}{c} 
	\substack{U_1:I(U_1;X)=0,\\H(Y_{d_1}|X,U_1)=0}
	\end{array}}H(U_1)$ and $U_1^*$ be an optimizer that achieves $m_1$. Furthermore, let $m_2=\min_{\begin{array}{c} 
	\substack{U_2:I(U_2,U_1^*;X)=0,\\H(Y_{d_2}|X,U_1^*,U_2)=0}
	\end{array}}H(U_2)$ and $U_2^*$ be an optimizer that achieves $m_2$. Similarly, let $m_i=\min_{\begin{array}{c} 
	\substack{U_i:I(U_1^*,\ldots,U_{i-1}^*,U_i;X)=0,\\H(Y_{d_i}|X,U_1^*,\ldots,U_{i-1}^*,U_i)=0}
	\end{array}}H(U_i)$ and $U_i^*$ be an optimizer that achieves $m_i$. 
\begin{theorem}\label{th1}
	Let RVs $(X,Y)=(X,Y_1,\ldots,Y_N)$ be distributed according to $P_{XY_1\cdot Y_N}$ supported on alphabets $\mathcal{X}$ and $\mathcal{Y}$, and let the shared secret key size be $|\mathcal{X}|$, i.e., $T=|\mathcal{X}|$. 
	Then, we have
	\begin{align}
	\mathbb{L}(P_{XY},|\mathcal{X}|) &\leq \sum_{i=1}^{K} \ceil{m_i}+\ceil{\log (|\mathcal{X}|)}\label{tt} \\&\leq\sum_{i=1}^{K}\ceil{\log\left(|\mathcal{U}_i|\right)}+\ceil{\log (|\mathcal{X}|)}.\label{koon2wi}
	\end{align}
	where 
	\begin{align}
	|\mathcal{U}_i| &\leq |\mathcal{X}||\mathcal{U}_1|\cdot|\mathcal{U}_{i-1}|(|\mathcal{Y}_{d_i}|-1)+1,\\
	|\mathcal{U}_1| &\leq |\mathcal{X}|(|\mathcal{Y}_{d_1}|-1)+1,
	\end{align}
	and $|\mathcal{Y}_{d_i}|=2^{F}, \ \forall i\in [K]$.
\end{theorem}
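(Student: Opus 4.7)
\noindent\emph{Proof plan for Theorem~\ref{th1}.} My strategy is to build a multi-part code that is the sequential analogue of the two-part code in \cite{kostala}. First, using the one-time-pad, encode $X$ as $\mathcal{C}_0 \triangleq X\oplus W$, which is uniformly distributed on $[|\mathcal{X}|]$ and independent of $X$ by the one-time-pad property (this uses the assumption $T=|\mathcal{X}|$ and the independence of $W$ from $(X,Y)$). This initial part costs $\ceil{\log|\mathcal{X}|}$ bits and can be appended to $\mathcal{C}_1$ so the overall code remains in the sequential format of Section~\ref{sec:system}.

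\noindent Next, recursively construct the RVs $U_1^*,U_2^*,\ldots,U_K^*$ that appear in the theorem's statement. Apply the FRL (Lemma~\ref{FRL}) to the pair $(X,Y_{d_1})$ to obtain $U_1^*$ satisfying $I(U_1^*;X)=0$ and $H(Y_{d_1}|X,U_1^*)=0$, with $|\mathcal{U}_1^*|\le |\mathcal{X}|(|\mathcal{Y}_{d_1}|-1)+1$. For $i\ge 2$, invoke the extended FRL of Lemma~\ref{chaghal} with the role of $X$ played by $(X,U_1^*,\ldots,U_{i-1}^*)$ and with $Y\leftarrow Y_{d_i}$; because $(X,U_1^*,\ldots,U_{i-1}^*)$ already satisfies the joint independence hypothesis $I(X;U_1^*,\ldots,U_{i-1}^*)=0$ (which was preserved at the previous step), Lemma~\ref{chaghal} yields a $U_i^*$ of the required alphabet size with $I(U_1^*,\ldots,U_i^*;X)=0$ and $H(Y_{d_i}|X,U_1^*,\ldots,U_i^*)=0$. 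Encode each $U_i^*$ with a prefix-free variable-length code (e.g.\ a Shannon/Huffman code) of expected length at most $\ceil{m_i}$, or with a fixed-length code of length $\ceil{\log|\mathcal{U}_i|}$, to produce $\mathcal{C}_i$.

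\noindent To verify the three requirements, I would argue as follows. Decodability: at time $i$ the decoder uses $W$ to recover $X=\mathcal{C}_0\oplus W$ from the first part, uniquely parses $U_1^*,\ldots,U_i^*$ from the prefix-free codewords $\mathcal{C}_1,\ldots,\mathcal{C}_i$, and then outputs $\hat Y_{d_i}=f_i(X,U_1^*,\ldots,U_i^*)$ where $f_i$ is the deterministic map guaranteed by \eqref{t22}. Privacy: by the chain rule,
\begin{align*}
I(\mathcal{C}_0,\mathcal{C}_1,\ldots,\mathcal{C}_K;X) &= I(\mathcal{C}_0;X) + I(\mathcal{C}_1,\ldots,\mathcal{C}_K;X\,|\,\mathcal{C}_0),
\end{align*}
the first term vanishes by the one-time-pad, and the second vanishes because $(\mathcal{C}_1,\ldots,\mathcal{C}_K)$ is a deterministic function of $(U_1^*,\ldots,U_K^*)$, which is independent of $X$, and $\mathcal{C}_0$ is a function of $(X,W)$ with $W$ independent of everything. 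Length: summing the expected lengths of $\mathcal{C}_0,\mathcal{C}_1,\ldots,\mathcal{C}_K$ gives \eqref{tt}; using the fixed-length bound $\ceil{\log|\mathcal{U}_i|}$ in place of $\ceil{m_i}$ gives \eqref{koon2wi}, and substituting the alphabet-size bounds from Lemma~\ref{chaghal} closes the chain.

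\noindent The main obstacle I anticipate is making the privacy argument airtight while the encoder is sequential: one must ensure that the joint independence $I(U_1^*,\ldots,U_i^*;X)=0$ is preserved at every recursion step and that mixing in the one-time-pad part $\mathcal{C}_0$ does not break independence when conditioning on the later $\mathcal{C}_j$'s. Lemma~\ref{chaghal} handles the recursion cleanly because its hypothesis exactly matches the conclusion of the previous step, so the induction on $i$ goes through. The remaining subtlety is that the $U_i^*$'s are produced from $(X,U_1^{i-1})$ and hence depend on $X$; what keeps privacy intact is that they are \emph{jointly} independent of $X$ (not conditionally), which is precisely what \eqref{t12} guarantees. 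Once this is verified, the fixed-length cardinality bounds and the prefix-free length-entropy inequality produce the two inequalities in the theorem.
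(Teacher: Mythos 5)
Your proposal is correct and follows essentially the same route as the paper's proof: a one-time-pad part for $X$ of length $\ceil{\log|\mathcal{X}|}$, followed by a sequential application of Lemma~\ref{chaghal} (with FRL initializing $U_1$) to build jointly $X$-independent $U_1,\ldots,U_K$, encoded losslessly, with decodability from $H(Y_{d_i}|X,U_1^i)=0$ and privacy from choosing $W$ independent of $(X,Y,U_1^K)$. The only cosmetic difference is that you expand $I(\mathcal{C};X)$ conditioning on the one-time-pad part first, whereas the paper conditions on $U_1^K$ first and verifies $I(\tilde{X};X|U_1^K)=0$ explicitly; both orderings close the argument.
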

 \begin{sproof}
 	The complete proof is provided in Appendix~A.
 	We use a multi-part code construction to achieve both upper bounds. As shown in Fig. \ref{achieve}, the encoder first encodes the private data $X$ using one-time-pad coding \cite[Lemma~1]{kostala2} and sends it over the shared link, which uses $\ceil{\log(|\mathcal{X}|)}$ bits. The rest follows since by assumption the encoder has access to the realization of $X$ and in the one-time-pad coding, the RV added to $X$ is the shared key, which is of size $|\mathcal{X}|$, and as a result the output has uniform distribution.
 	At each time, the encoder receives $Y_{d_i}$. Then, the encoder produces $U_1^K$ as follows. First, the encoder receives $Y_{d_1}$ and produces $U_1$ based on FRL using $Y\leftarrow Y_{d_1}$ and $X \leftarrow X$. The encoder uses any lossless codes to encode $U_1$ and sends it over the shared link. 
 	Note that $I(U_1;X)=0$, $H(Y_{d_1}|U_1,X)=0$, and $|\mathcal{U}_1|\leq |\mathcal{X}|(|\mathcal{Y}_{d_1}|-1)+1$. Next, the encoder receives $Y_{d_2}$ and produces $U_2$ using Lemma \ref{chaghal} with $k=1$ and encode it to send over the shared link. Latter follows since the encoder has access to $U_1$. We have $I(X;U_1,U_2)=0$, $H(Y_{d_2}|U_1,U_2,X)=0$, and $|\mathcal{U}_2|\leq |\mathcal{X}||\mathcal{U}_1|(|\mathcal{Y}_{d_2}|-1)+1$. After receiving the $i$-th demand, i.e., $Y_{d_i}$, the encoder produces $U_i$ using Lemma \ref{chaghal} with $k=i-1$ and $U_1^{i-1}$. We have $I(X;U_1^i)=0$, $H(Y_{d_i}|U_1^i,X)=0$, and $|\mathcal{U}_i|\leq|\mathcal{X}||\mathcal{U}_1|\cdot|\mathcal{U}_{i-1}|(|\mathcal{Y}_{d_i}|-1)+1$. This procedure continuous until encoding the last demanded file $Y_{d_K}$. Note that the leakage from the delivered messages sent over the shared link to the adversary is zero since by construction we have $I(X;U_1^K)=0$. Furthermore, if we choose the randomness used in the one-time-pad coding independent of $(X,U_1^K)$ we have $I(X;U_1^K,\tilde{X})=0$, where $\tilde{X}$ is the output of the one-time-pad coding.

 \begin{figure}[h]
 	\centering
 	\includegraphics[scale = .09]{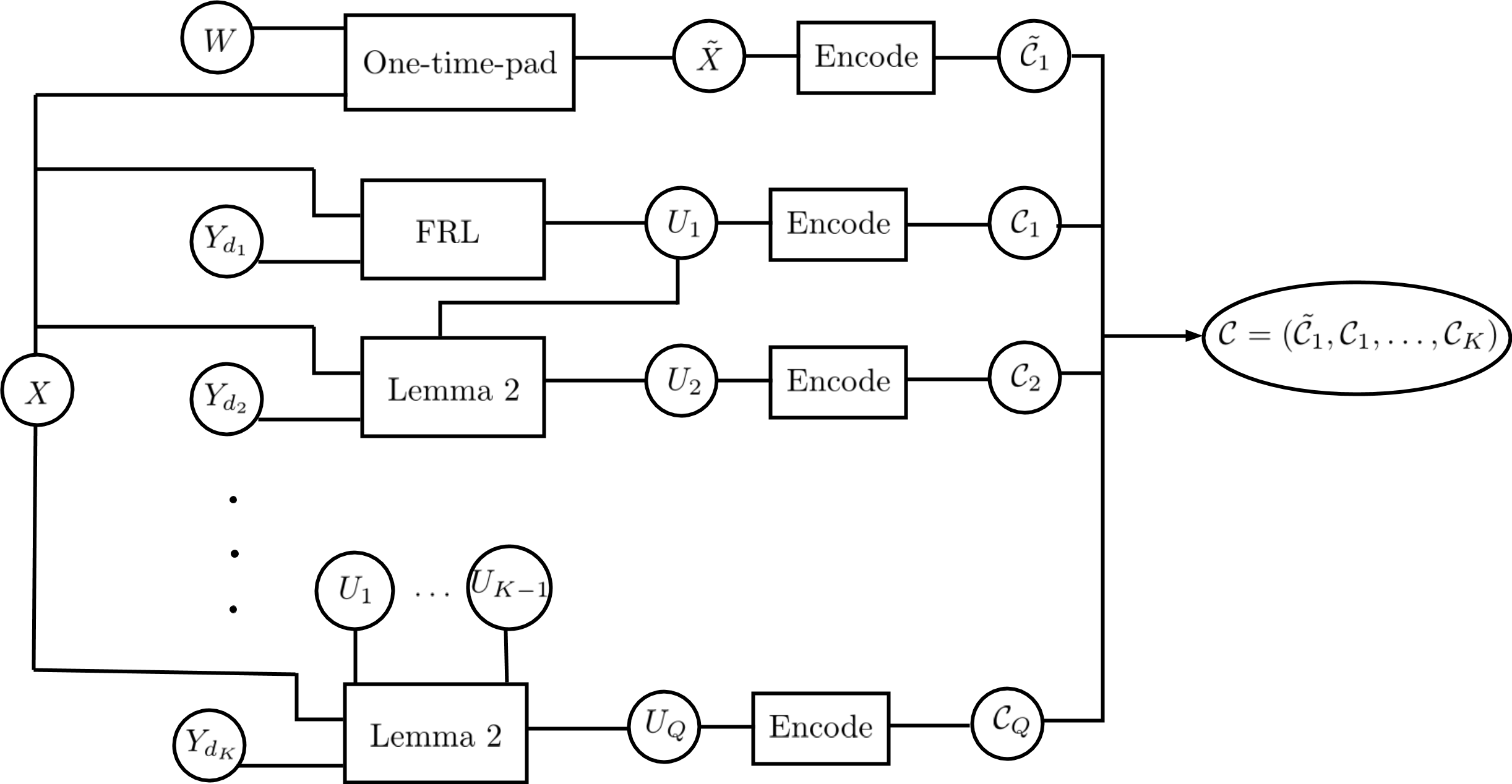}
 	\caption{Sequentially encoding design: illustration of the achievability scheme of Theorem \ref{th1}. Multi-part code construction is used to produce the delivered message, $\mathcal{C}$. The encoder sends $\cal C$ over the link, which is independent of $X$.}
 	\label{achieve}
 \end{figure}
As shown in Fig. \ref{decode}, at user side, the user, first decodes $X$ using one-time-pad decoder. Then, it decodes $Y_{d_i}$ using $(\mathcal{C}_1,\ldots,\mathcal{C}_i)$ and $X$. 
Using this scheme the compression rate in \eqref{koon2wi} is achieved. Finally, following the similar approach \eqref{tt} is obtained. 
\begin{figure}[h]
	\centering
	\includegraphics[scale = .09]{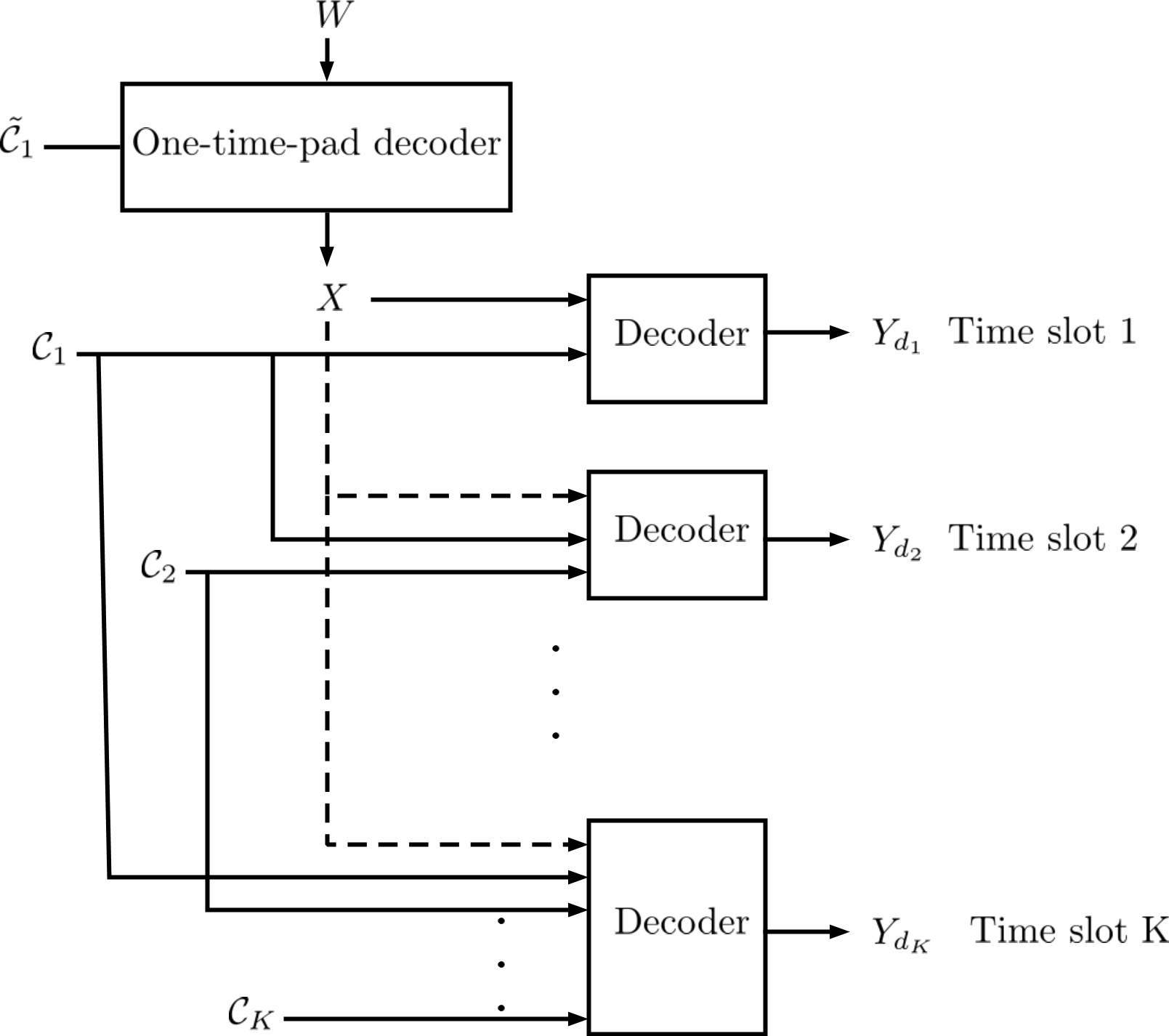}
	\caption{Illustration of the sequentially decoding process of the user for the achievability scheme of Theorem \ref{th1}.}
	\label{decode}
\end{figure}
  \end{sproof}
\\
 Next, we obtain a converse bound (lower bound) on \eqref{main1wi}.
 \begin{theorem}\label{convv}
 	Let the shared key $W$ be independent of $(X,Y)$. For any RVs $(X,Y_1,\ldots,Y_N)$ distributed according to $P_{XY_1\cdot Y_N}$, where $|\mathcal{X}|$ is finite and $|\mathcal{Y}|$ is finite or countably infinite and any shared key size $T\geq 1$ we have
 	\begin{align}\label{lowerr}
 	\mathbb{L}(P_{XY},T)\geq \max_{x\in\mathcal{X}} H(Y_{d_1},\ldots,Y_{d_K}|X=x).
 	\end{align}
 \end{theorem}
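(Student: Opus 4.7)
The plan is to combine a conditional version of the source-coding (Kraft) inequality with the decodability and perfect-privacy constraints, and to exploit the fact that the shared key $W$ is independent of $(X,Y)$. The heart of the argument is that the perfect privacy constraint $I(\mathcal{C};X)=0$ forces the length function $\mathbb{L}(\mathcal{C})$ to have the same expectation conditioned on $X=x$ as unconditionally, while decodability gives a Shannon-type lower bound in terms of $H(Y_{d_1},\ldots,Y_{d_K}\mid X=x)$ for every particular $x$.

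First I would fix an arbitrary $x\in\mathcal{X}$ with $P_X(x)>0$ and an arbitrary admissible code. Since $\mathbb{L}(\mathcal{C})$ is a (deterministic) function of $\mathcal{C}$, and $\mathcal{C}\perp X$ by \eqref{lashwi}, we obtain $\mathbb{E}[\mathbb{L}(\mathcal{C})\mid X=x]=\mathbb{E}[\mathbb{L}(\mathcal{C})]$. The code is prefix-free on its support $\xi$, so $\sum_{c\in\xi}2^{-|c|}\le 1$ by Kraft's inequality; applying Gibbs' inequality to the conditional distribution $P_{\mathcal{C}\mid X=x}$ with $q(c)\propto 2^{-|c|}$ yields $H(\mathcal{C}\mid X=x)\le \mathbb{E}[\mathbb{L}(\mathcal{C})\mid X=x]$. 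Chaining these two steps gives $\mathbb{E}[\mathbb{L}(\mathcal{C})]\ge H(\mathcal{C}\mid X=x)$.

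Next I would bring in the key and the decodability requirement. Conditioning reduces entropy, so $H(\mathcal{C}\mid X=x)\ge H(\mathcal{C}\mid X=x,W)$. By \eqref{choon}, each $Y_{d_i}$ is a deterministic function of $(W,\mathcal{C}_1,\ldots,\mathcal{C}_i,d_1,\ldots,d_i)$, hence $(Y_{d_1},\ldots,Y_{d_K})$ is a deterministic function of $(W,\mathcal{C},d_1,\ldots,d_K)$, giving $H(\mathcal{C}\mid X=x,W)\ge H(Y_{d_1},\ldots,Y_{d_K}\mid X=x,W)$. Finally, the hypothesis $W\perp(X,Y)$ lets me drop the conditioning on $W$, so $H(Y_{d_1},\ldots,Y_{d_K}\mid X=x,W)=H(Y_{d_1},\ldots,Y_{d_K}\mid X=x)$. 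Putting the inequalities together yields $\mathbb{E}[\mathbb{L}(\mathcal{C})]\ge H(Y_{d_1},\ldots,Y_{d_K}\mid X=x)$ for every $x$; maximizing over $x$ and then taking the infimum over codes in $\mathcal{H}(\alpha,T,d_1,\ldots,d_K)$ produces \eqref{lowerr}.

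I do not expect a serious technical obstacle here; the only subtle point is the justification of the conditional Kraft step, which is clean once one notes that the support $\xi$ is prefix-free regardless of the value of $X$, so that the bound $H(\mathcal{C}\mid X=x)\le \mathbb{E}[\mathbb{L}(\mathcal{C})\mid X=x]$ holds pointwise in $x$. The use of independence of $\mathcal{C}$ and $X$ (from the perfect-privacy constraint) to equate $\mathbb{E}[\mathbb{L}(\mathcal{C})\mid X=x]$ with $\mathbb{E}[\mathbb{L}(\mathcal{C})]$ is what converts this pointwise Shannon bound into the claimed maximum over $x$; without perfect privacy one would only obtain the weaker bound $\mathbb{E}[\mathbb{L}(\mathcal{C})]\ge H(Y_{d_1},\ldots,Y_{d_K})$.
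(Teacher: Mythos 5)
Your argument is correct and follows essentially the same route as the paper: the paper simply verifies that the encoder's output satisfies $I(U;X)=0$ and $H(Y_{d_1},\ldots,Y_{d_K}\mid W,U,X)=0$ and then invokes \cite[Lemma 6]{kostala}, which encapsulates exactly the chain you spell out (perfect privacy to equate conditional and unconditional expected length, the conditional Kraft/Gibbs bound, decodability through $W$, and $W\perp(X,Y)$ to drop the key). Your write-up is just a self-contained proof of that cited lemma, so there is no substantive difference in approach.
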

\begin{proof}
	Let $U=(U_1,\ldots,U_K)$ be the response of the encoder to satisfy the demands $(d_1,\ldots,d_K)$. Hence, $U$ satisfies
	\begin{align}
	I(U;X)&=0,\\
	H(Y_{d_1},\ldots,Y_{d_K}|W,U)&=0.\label{choo}
	\end{align}
	Noting that \eqref{choo} implies $H(Y_{d_1},\ldots,Y_{d_K}|W,U,X)=0$, thus, by using \cite[Lemma 6]{kostala} we obtain the result.
\end{proof}
Next we study the lower and upper bounds derived in Theorem 1 and Theorem 2 in a numerical example and we obtain an asymptotic constant-factor approximation of the $\mathbb{L}(P_{XY},T)$ when $F$ is large enough.
\begin{example}
	Let $X\sim \text{Bern}(p)\in \{0,1\}$ and $Z_{i}^{j}\sim \text{Bern}(\frac{1}{2})$ be i.i.d RVs for $i\in[F]$ and $j\in[N]$. We also assume that $\{Z_{i}^{j}\}$ is independent of $X$. Furthermore, let $Y_{i}^{j}=Z_{i}^{j} \wedge X$, where $Y_{i}^{j}$ corresponds to the $i$-th bit of the $j$-th file in the database and $\wedge$ stands for the logical ``and'' operator. Let $(Y_{d_1},\ldots,Y_{d_K})$ be the demanded files by the user. We have $H(Y_{d_1},\ldots,Y_{d_K}|X=0)=0$ since $Y_{i}^{j}=0$. Moreover, we have 
	\begin{align}\label{az}
	H(Y_{d_1},\ldots,Y_{d_K}|X=1)&= H(Z_{d_1},\ldots,Z_{d_K}|X=1)\nonumber\\&=H(Z_{d_1},\ldots,Z_{d_K})=KF.
	\end{align}
	Hence, using Theorem 2 we have $\mathbb{L}(P_{XY},T)\geq KF$. Using Theorem 1 and \eqref{az} the ratio between upper and lower bounds can be obtained asymptotically as follows
	\begin{align}\label{zer}
	&\lim_{F\rightarrow \infty}\frac{\sum_{i=1}^{K}\ceil{\log\left(|\mathcal{U}_i|\right)}+\ceil{\log (|\mathcal{X}|)}}{\max_{x\in\mathcal{X}} H(Y_{d_1},\ldots,Y_{d_K}|X=x)}=\nonumber\\ &\lim_{F\rightarrow \infty}\!\!\frac{\log(2^{F+1}\!\!-\!1)\!+\!\log(2(2^{F+1}\!\!-\!1)(2^F\!\!-\!\!1)\!+\!1)\!+\!\ldots\!+\!1}{KF}\!\!=\nonumber\\&\lim_{F\rightarrow \infty}\frac{\log(2^{\frac{K(K+1)}{2}F})}{KF}=\frac{K+1}{2}.
	\end{align}
	Using \eqref{zer}, asymptotically we have
	\begin{align}
	1\leq \lim_{F\rightarrow \infty} \frac{\mathbb{L}(P_{XY},2)}{\max_{x\in\mathcal{X}} H(Y_{d_1},\ldots,Y_{d_K}|X=x)}\leq\frac{K+1}{2}. 
	\end{align} 
	In other words, we have
	\begin{align}
	K\leq \lim_{F\rightarrow \infty} \frac{\mathbb{L}(P_{XY},2)}{F}\leq\frac{K(K+1)}{2}. 
	\end{align} 
	For instance, if the user asks for two files, i.e., $K=2$, we have $2\leq \lim_{F\rightarrow \infty} \frac{\mathbb{L}(P_{XY},2)}{F}\leq 3$.
\end{example}
\subsection{Application: Cache-aided networks with sequential encoder}
\begin{figure}[]
	\centering
	\includegraphics[scale = .11]{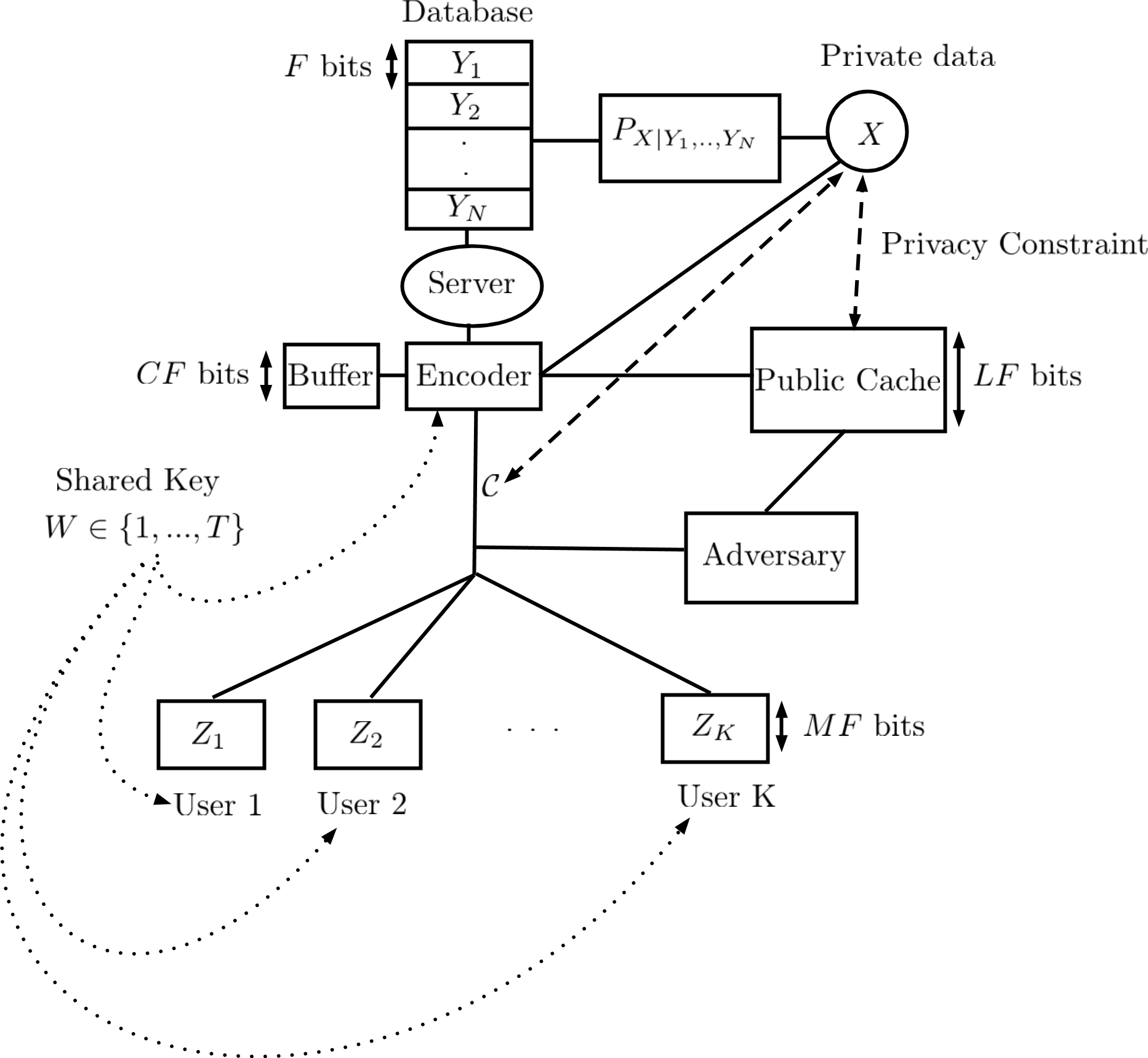}
	\caption{In this work a server wants to send a response over a shared link to satisfy users' demands, but since the database is correlated with the private data existing schemes are not applicable. In the delivery phase, 
		the server sends the response to an encoder using blocks of size $CF$ bits and the encoder hide the information about $X$ using one-time-pad coding and send the rest of message using an extension of Functional Representation Lemma (FRL).}
	\label{wiii}
\end{figure}
We consider the scenario illustrated in Fig. \ref{wiii}, in which a server has access to a database consisting of $N$ files $Y_1,..,Y_N$, where each file, of size $F$ bits, is sampled from the joint distribution $P_{XY_1\cdot Y_N}$, where $X$ denotes the private latent variable.
We assume that the server  does not know the realization of the private variable $X$. The server is connected to 
an encoder equipped with a local buffer of size $CF$ bits and has access to a public cache of size $LF$ bits as well. 
The encoder is connected to $K$ users over a shared link, where user $i$ has access to a local cache memory of size $MF$ bits. Furthermore, we assume that the encoder and the users have access to a shared secret key denoted by $W$, of size $T$. The system works in two phases: the placement and delivery phases, respectively, \cite{maddah1}. In the placement phase, the server and encoder fill the local caches using the database. Let $Z_k$ denote the content of the local cache memory of user $k$, $k\in[K]$ after the placement phase. In the delivery phase, first the users send their demands to the encoder and the server, where $d_k\in[N]$ denotes the demand of user $k$. The server sends a response, denoted by $\mathcal{C}'=(\mathcal{C}'_1,\mathcal{C}'_2,\ldots)$, to the encoder using blocks each of size $CF$ bits to satisfy all the demands. The encoder designs a message $\mathcal{C}_i$ using the $i$-th block of the server's response, i.e., $\mathcal{C}_i'$, and the shared key and send it over the shared link to the users. The encoder can store $\mathcal{C}_i$ in the public cache which is assumed to be not accessible by the users. The public cache enables the encoder to use previous stored messages to design the next message, i.e., the encoder can use $(\mathcal{C}_1,\ldots,\mathcal{C}_{i-1})$ to design message $\mathcal{C}_i$.
We assume that an adversary has access to the shared link and the public cache as well, and uses the delivered message $\mathcal{C}=(\mathcal{C}_1,\mathcal{C}_2,\ldots)$ and the content of public cache denoted by $P$ to extract information about $X$. However, the adversary does not have access to the users' local cache contents or the secret key. 
Since the files in the database are all correlated with the private latent variable $X$, the coded caching and delivery techniques introduced in \cite{maddah1} do not satisfy the privacy requirement. The goal of the cache-aided private delivery problem is to find a response $\mathcal{C}$ with minimum possible average length that satisfies a certain privacy constraint and the zero-error decodability constraint of users. Furthermore, the contents of the public cache $P$ must satisfy the privacy constraint. Here, we consider the worst case demand combinations $d=(d_1,..,d_K)$ to construct $\cal C$, and the expectation is taken over the randomness in the database. 
Similarly to \cite{maddah1}, we have $K$ caching functions to be used during the placement phase:
\begin{align}
\theta_k: [|\mathcal{Y}|] \rightarrow [2^{\floor{FM}}],\ \forall k\in[K], 
\end{align} 
such that
\begin{align}
Z_k=\theta_k(Y_1,\ldots,Y_N),\ \forall k\in[K].
\end{align} 
The output of the server $\mathcal{C}'(Y,d_1,\ldots,d_K)$ is the codeword the server sends to the encoder by using blocks of $CF$ bits to satisfy the demands of users $(Y_{d_1},\ldots,Y_{d_K})$. Let $\mathcal{C}'_i$ be the $i$-th block of the codeword $\mathcal {C}'$. 
Due to the limited size of the local buffer $C$, the encoder uses a sequentially coding scheme as follows. 
First, the encoder receives $\mathcal{C}'_1$ and encode it using the shared key. Let the output be $\mathcal{C}_1$. The encoder sends $\mathcal{C}_1$ over the shared link and also stores it to the public cache. Then, the encoder receives the second block $\mathcal{C}'_2$ and encodes it using $W$ and $\mathcal{C}_1$. The output is denoted by $\mathcal{C}_2$ and is sent through the shared link and is stored to the public cache.  
Here, we assume that $L$ is a large quantity. Furthermore, the encoder encodes $i$-th block $\mathcal{C}'_i$ using $W$ and $(\mathcal{C}_1,\ldots,\mathcal{C}_{i-1})$. The encoding scheme continues until the last block of the codeword $\mathcal {C}'$. Let $\mathcal{C}=(\mathcal{C}_1,\mathcal{C}_2,\ldots)$ denote the delivered message consisting of the messages sent over the shared link $\mathcal{C}_i$.  
At the user side, user $k$ employs the decoding function $\mathcal{D}_k$ to recover its demand $Y_{d_k}$, i.e., $\hat{Y}_{d_k}=\mathcal{D}_k(Z_k,W,\mathcal{C},d_1,\ldots,d_K)$.
The variable-length code $(\mathcal{C}',\mathcal{C},\mathcal{D}_1,..,\mathcal{D}_K)$ is lossless if for all $k\in[K]$ we have
\begin{align}\label{choon1}
\mathbb{P}(\mathcal{D}_k(\mathcal{C},W,Z_k,d_1,\ldots,d_K)\!=\!Y_{d_k})\!=\!1.
\end{align} 
In the following, we define perfectly private codes. 
The code $(\mathcal{C}',\mathcal{C},\mathcal{D}_1,\ldots,\mathcal{D}_K)$ is \textit{perfectly private} if
\begin{align}
I(\mathcal{C};X)=0.\label{lashwi1}
\end{align}
Let $\xi$ be the support of $\mathcal{C}$, where $\xi\subseteq \{0,1\}^*$. For any $c\in\xi$, let $\mathbb{L}(c)$ be the length of the codeword. The lossless code $(\mathcal{C}',\mathcal{C},\mathcal{D}_1,\ldots,\mathcal{D}_K)$ is \textit{$(\alpha,T,d_1,\ldots,d_K)$-variable-length} if 
\begin{align}\label{jojowi1}
\mathbb{E}(\mathbb{L}(\mathcal{C}))\!\leq\! \alpha,\ \forall w\!\in\!\![T]\ \text{and}\ \forall d_1,\ldots,d_K,
\end{align} 
and $(\mathcal{C}',\mathcal{C},\mathcal{D}_1,\ldots,\mathcal{D}_K)$ satisfies \eqref{choon1}.
Finally, let us define the set $\mathcal{H}(\alpha,T,d_1,\ldots,d_K)$ as follows:\\ 
$\mathcal{H}(\alpha,T,d_1,\ldots,d_K)\triangleq\{(\mathcal{C}',\mathcal{C},\mathcal{D}_1,\ldots,\mathcal{D}_K): (\mathcal{C}',\mathcal{C},\mathcal{D}_1,\ldots,\mathcal{D}_K)\ \text{is}\ \text{perfectly-private and}\\ (\alpha,T,d_1,\ldots,d_K)\text{-variable-length}  \}$. 
The cache-aided private compression design problem can be stated as follows
\begin{align}
\mathbb{L}(P_{XY_1\cdot Y_N},T,C)&=\!\!\!\!\!\inf_{\begin{array}{c} 
	\substack{(\mathcal{C}',\mathcal{C},\mathcal{D}_1,\ldots,\mathcal{D}_K)\in\mathcal{H}(\alpha,T,d_1,\ldots,d_K)}
	\end{array}}\alpha.\label{main1wi1}
\end{align} 
In the next theorem, let $\mathcal{C}'\triangleq(C_{\gamma_1},\ldots,C_{\gamma_Q})$ be the code designed by the achievable scheme in \cite[Theorem 1]{maddah1}, where $Q\triangleq\binom{K}{p+1}$, $M\in\{\frac{N}{K},\frac{2N}{K},\ldots,N\}$, and $p\triangleq\frac{KM}{N}$ is an integer.
\begin{theorem}\label{th3}
	Let the local buffer size be $\frac{F}{\binom{K}{p}}$ bits, i.e., $C=\frac{1}{\binom{K}{p}}$, $L$ be large enough, and let $M\in\{\frac{N}{K},\frac{2N}{K},\ldots,N\}$. Then, we have
	\begin{align}
	\mathbb{L}(P_{XY},|\mathcal{X}|,\frac{1}{\binom{K}{p}}) \leq \sum_{i=1}^{Q}\ceil{\log\left(|\mathcal{U}_i|\right)}+\ceil{\log (|\mathcal{X}|)}.\label{koon2wifs}
	\end{align}
	where 
	\begin{align}
	|\mathcal{U}_i| &\leq |\mathcal{X}||\mathcal{U}_1|\cdot|\mathcal{U}_{i-1}|(|C_{\gamma_i}|-1)+1,\\
	|\mathcal{U}_1| &\leq |\mathcal{X}|(|C_{\gamma_1}|-1)+1,
	\end{align}
	and $|C_{\gamma_i}|=2^{CF}=2^{\frac{F}{\binom{K}{p}}}, \ \forall i\in [Q]$.
\end{theorem}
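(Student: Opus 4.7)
The plan is to combine the Maddah-Ali--Niesen (MAN) coded caching scheme with the multi-part one-time-pad plus (extended) FRL construction from Theorem~\ref{th1}, now applied block-by-block to the server's response so that the limited local buffer $C=\frac{1}{\binom{K}{p}}$ suffices.

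First, I would run the MAN placement phase unchanged: each file is split into $\binom{K}{p}$ subfiles, and $Z_k$ is filled so that the memory constraint $MF$ is met. In the delivery phase, the server produces the codeword $\mathcal{C}'=(C_{\gamma_1},\ldots,C_{\gamma_Q})$ with $Q=\binom{K}{p+1}$ and $|C_{\gamma_i}|=2^{F/\binom{K}{p}}$, i.e., each block has exactly $CF=\frac{F}{\binom{K}{p}}$ bits, which fits in the local buffer. The server streams the blocks one at a time into the encoder.

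Next, the encoder builds the private delivery message in the same multi-part fashion used for Theorem~\ref{th1}, but with the sequence $(C_{\gamma_1},\ldots,C_{\gamma_Q})$ playing the role previously played by $(Y_{d_1},\ldots,Y_{d_K})$. Concretely: it first sends $\tilde X$, the one-time-pad of $X$ with the uniform key $W$ (cost $\ceil{\log|\mathcal{X}|}$ bits, and $\tilde X\!\perp\! X$ since $W$ is independent of $(X,Y)$). Then, after receiving $C_{\gamma_1}$, it applies FRL (Lemma~\ref{FRL}) with $Y\leftarrow C_{\gamma_1}$, $X\leftarrow X$ to obtain $U_1$ with $I(U_1;X)=0$, $H(C_{\gamma_1}|U_1,X)=0$, and $|\mathcal U_1|\le |\mathcal X|(|C_{\gamma_1}|-1)+1$; it then losslessly encodes $U_1$ into $\mathcal C_1$ using $\ceil{\log|\mathcal U_1|}$ bits, transmits it on the link, and stores it in the public cache. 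Inductively, having produced $U_1,\ldots,U_{i-1}$ (still available from the public cache, since $L$ is large), the encoder reads the next block $C_{\gamma_i}$, applies Lemma~\ref{chaghal} with $k=i-1$ and $(X,U_1,\ldots,U_{i-1})$ in the role of $X$, $Y\leftarrow C_{\gamma_i}$, to obtain $U_i$ satisfying
\begin{align*}
I(X;U_1,\ldots,U_i)=0,\quad H(C_{\gamma_i}|X,U_1,\ldots,U_i)=0,
\end{align*}
with $|\mathcal U_i|\le |\mathcal X||\mathcal U_1|\cdots|\mathcal U_{i-1}|(|C_{\gamma_i}|-1)+1$. It then sends $\mathcal C_i$, a lossless binary encoding of $U_i$, costing $\ceil{\log|\mathcal U_i|}$ bits. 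Summing over $i=1,\ldots,Q$ and adding the one-time-pad contribution yields the stated bound~\eqref{koon2wifs}.

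For decodability, user $k$ first inverts the one-time-pad using $W$ to recover $X$, then iteratively recovers $C_{\gamma_i}$ from $(X,U_1,\ldots,U_i)$ using the deterministic-function property guaranteed by Lemma~\ref{chaghal}, and finally decodes $Y_{d_k}$ from $(\mathcal C',Z_k)$ exactly as in the MAN scheme. For privacy, the key observation is that the full delivered message is $\mathcal C=(\tilde X,\mathcal C_1,\ldots,\mathcal C_Q)$, and by the inductive construction $I(X;U_1,\ldots,U_Q)=0$; choosing the one-time-pad randomness independent of $(X,U_1^Q)$ gives $I(X;\tilde X,U_1^Q)=0$, which implies $I(\mathcal C;X)=0$; the public-cache contents are a subsequence of $\mathcal C$, so they inherit the same independence. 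The main subtlety, and the place I would be most careful, is making sure Lemma~\ref{chaghal} is invoked correctly with the conditioning on all previously generated $U_j$'s and that the encoder, which only has the running buffer plus the public cache plus $W$ (but not $X$ --- note the contrast with Theorem~\ref{th1}), can still carry out the construction; since the one-time-pad step requires only $W$ and the FRL/extended-FRL constructions can be precomputed from $P_{XY}$ and implemented as deterministic lookups on the currently available block $C_{\gamma_i}$ together with the already-generated $U_1,\ldots,U_{i-1}$ in the public cache, this is consistent with the stated system model.
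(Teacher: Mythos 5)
Your construction is essentially identical to the paper's proof: MAN placement, the server streaming $C_{\gamma_1},\ldots,C_{\gamma_Q}$ block-by-block into the buffer, a one-time-pad on $X$ followed by FRL on $C_{\gamma_1}$ and inductive applications of Lemma~\ref{chaghal} for $i\geq 2$ using the public cache, with the same decoding order and the same independence argument for privacy. One caveat on your closing remark: the claim that the encoder lacks $X$ and that the one-time-pad step ``requires only $W$'' is not right --- computing $\tilde X = X + W \bmod |\mathcal{X}|$ and realizing the (extended) FRL variables $U_i$ both require the realization of $X$ at the encoder, exactly as in Theorem~\ref{th1}; the paper only states that the \emph{server} does not know $X$.
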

\begin{proof}
	The proof follows similar arguments as Theorem 1. In the placement phase, we use the same scheme as in \cite{maddah1}. In the delivery phase, we use the following strategy.
	We use multi-part code construction to achieve the upper bound. The encoder first encodes the private data $X$ using one-time-pad coding \cite[Lemma~1]{kostala2} and sends it over the shared link, which uses $\ceil{\log(|\mathcal{X}|)}$ bits. 
	Next, the server produces the response $\mathcal{C}'$ as in \cite[Theorem 1]{maddah1} and sends it to the encoder using blocks of $CF$ bits. At each time, the encoder receives $C_{\gamma_i}$ since the size of buffer equals to the size of $C_{\gamma_i}$. Then, the encoder produces $U_1^Q$ as follows. First, the encoder receives $C_{\gamma_1}$ and produces $U_1$ based on FRL using $Y\leftarrow C_{\gamma_1}$ and $X \leftarrow X$, where $C_{\gamma_1}$, is the first codeword of the response $\mathcal{C}'$. The encoder uses any lossless codes to encode $U_1$ and sends it over the shared link. Furthermore, the encoder stores it to the public cache. Note that $I(U_1;X)=0$, $H(C_{\gamma_1}|U_1,X)=0$, and $|\mathcal{U}_1|\leq |\mathcal{X}|(|\mathcal{Y}|-1)+1$. Next, the encoder receives $C_{\gamma_2}$ and produces $U_2$ using Lemma \ref{chaghal} with $k=1$ and encode it to send over the shared link. Latter follows since the encoder has access to $U_1$ which is stored in the public cache. We have $I(X;U_1,U_2)=0$, $H(C_{\gamma_2}|U_1,U_2,X)=0$, and $|\mathcal{U}_2|\leq |\mathcal{X}||\mathcal{U}_1|(|\mathcal{Y}|-1)+1$. After receiving the $i$-th codeword of the response $\mathcal{C}'$, i.e., $C_{\gamma_i}$, the encoder produces $U_i$ using Lemma \ref{chaghal} with $k=i-1$ and $U_1^{i-1}$ which are stored in the public cache. We have $I(X;U_1^i)=0$, $H(C_{\gamma_i}|U_1^i,X)=0$, and $|\mathcal{U}_i|\leq|\mathcal{X}||\mathcal{U}_1|\cdot|\mathcal{U}_{i-1}|(|\mathcal{Y}|-1)+1$. This procedure continues until encoding the last codeword of the response $C_{\gamma_Q}$. Note that the leakage from the delivered messages sent over the shared link and public cache to the adversary is zero since by construction we have $I(X;U_1^Q)=0$. Furthermore, if we choose the randomness used in the one-time-pad coding independent of $(X,U_1^Q)$ we have $I(X;U_1^Q,\tilde{X})=0$, where $\tilde{X}$ is the output of the one-time-pad coding. At user side, each user, e.g., user $i$, first decodes $X$ using one-time-pad decoder. Then, it sequentially decodes $\mathcal{C}'$ using $\mathcal{C}$ and $X$. Finally, it decodes $Y_{d_i}$ using $Z_i$ and the response $\mathcal{C}'$. Finally, using this scheme the compression rate in \eqref{koon2wifs} is achieved.
\end{proof}
 
\section*{Appendix A}
\subsection*{Proof of Theorem~\ref{th1}:}
	Let $W$ be the shared secret key with key size $T=|\mathcal{X}|$, which is uniformly distributed over $\{1,,\ldots,T\}=\{1,\ldots,|\mathcal{X}|\}$ and independent of $(X,Y_1,\ldots,Y_N)$. As shown in Fig. \ref{achieve}, first, the private data $X$ is encoded using the shared secret key \cite[Lemma~1]{kostala2}. Thus, we have
\begin{align*}
\tilde{X}=X+W\ \text{mod}\ |\mathcal{X}|.
\end{align*}
Next, we show that $\tilde{X}$ has uniform distribution over $\{1,\ldots,|\mathcal{X}|\}$ and $I(X;\tilde{X})=0$. We have
\begin{align}\label{t}
H(\tilde{X}|X)\!=\!H(X\!+\!W|X)\!=\!H(W|X)\!=\!H(W)\!=\! \log(|\mathcal{X}|).
\end{align}
Furthermore, $H(\tilde{X}|X)\leq H(\tilde{X})$, and combining it with \eqref{t}, we obtain $H(\tilde{X}|X)= H(\tilde{X})=\log(|\mathcal{X}|)$. For encoding $\tilde{X}$ the encoder uses $\ceil{\log(|\mathcal{X}|)}$ bits and sends it over the shared link. We denote the encoded message $\tilde{X}$ by $\tilde{\mathcal{C}}_1$. 
First, the encoder receives $Y_{d_1}$ and produces $U_1$ based on FRL using $Y\leftarrow Y_{d_1}$ and $X \leftarrow X$. Thus, we have
$
I(U_1;X)=0$, $H(Y_{d_1}|U_1,X)=0$, $|\mathcal{U}_1|\leq |\mathcal{X}|(|\mathcal{Y}|-1)+1$.
The encoder uses any lossless codes to encode $U_1$ and sends it over the shared link. Let $\mathcal{C}_1$ denote the encoded message $U_1$. Furthermore, the encoder stores $\mathcal{C}_1$ to the public cache. Next,
the encoder receives $Y_{d_2}$ and produces $U_2$ based on Lemma \ref{chaghal} with $k=1$ and $U_1$ that is stored in the public cache, encode it to $\mathcal{C}_2$ and sends it over the shared link. Thus, we obtain
$
I(U_1,U_2;X)=0$,  $H(Y_{d_2}|U_2,U_1,X)=0$, and $|\mathcal{U}_2|\leq |\mathcal{X}||\mathcal{U}_1|(|\mathcal{Y}_{d_2}|-1)+1$.
Furthermore, the encoder stores $\mathcal{C}_2$ in the public cache.
After receiving $Y_{d_i}$, the encoder produces $U_i$ using Lemma \ref{chaghal} with $k=i-1$ and $U_1^{i-1}$ which are stored in the public cache.
We have
\begin{align}
I(U_1,\ldots,U_i;X)&=0\label{kos2}, \\ H(Y_{d_i}|U_1,\ldots,U_i,X)&=0,\label{kos1}\\ |\mathcal{U}_i|&\leq |\mathcal{X}||\mathcal{U}_1|\cdot |\mathcal{U}_{i-1}|(|\mathcal{Y}_{d_i}|-1)+1.
\end{align}
This procedure continuous until encoding $Y_{d_K}$ to $\mathcal{C}_K$. By using the construction used in Lemma \ref{chaghal}, we have
\begin{align*}
&H(\tilde{X},U_1^K)=H(\tilde{X})+\sum_i H(U_i)\\ &\leq \sum_{i=1}^{K}\ceil{\log\left(|\mathcal{X}||\mathcal{U}_1|\cdot |\mathcal{U}_{i-1}|(|\mathcal{Y}_{d_i}|-1)+1\right)}+\ceil{\log (|\mathcal{X}|)},
\end{align*}
where we used the fact that $U_1,\ldots,U_K,\tilde{X}$ are jointly independent. Furthermore, we choose $W$ to be independent of $(X,Y_1,\ldots,Y_N,U_1,\ldots,U_K)$ which results in $I(\tilde{X};U_1^K)=0$. 

Next, we present the decoding strategy at the user side. Since $W$ is shared, the user decodes $X$ by using $\tilde{\mathcal{C}}_1$ and $W$, i.e., by using one-time-pad decoder. By adding $|\mathcal{X}|-W$ to $\tilde{X}$ we obtain $X$. 
Then, as shown in Fig. \ref{decode}, user $1$ decodes $Y_{d_1}$ using $\mathcal{C}_1$ and $X$. This follows by $H(Y_{d_1}|U_1,X)$. The user sequentially decode $Y_{d_i}$ based on \eqref{kos1}. 

What remains to be shown is the leakage constraint. We choose $W$ independent of $(X,Y_1,\ldots,Y_N,U_1,\ldots,U_K)$. 
\begin{align*}
I(\mathcal{C};X)&=I(\tilde{\mathcal{C}}_1,\mathcal{C}_1^K;X)=I(\tilde{X},U_1^K;X)\\&=I(U_1^K;X)+I(\tilde{X};X|U_1^K)=I(\tilde{X};X|U_1^K)\\&\stackrel{(a)}{=}H(\tilde{X}|U_1^K)-H(\tilde{X}|X,U_1^K)\\&=H(\tilde{X}|U_1^K)-H(X+W|X,U_1^K)\\&\stackrel{(b)}{=}H(\tilde{X}|U_1^K)-H(W)\\&\stackrel{(c)}{=}H(\tilde{X})-H(W)=\log(|\mathcal{X}|)\!-\!\log(|\mathcal{X}|)\!=\!0
\end{align*}
where (a) follows from \eqref{kos2}; (b) follows since $W$ is independent of $(X,U_1^K)$; and (c) from the independence of $U_1^K$ and $\tilde{X}$. The latter follows since we have
\begin{align*}
0\leq I(\tilde{X};X|U_1^K) &= H(\tilde{X}|U_1^K)-H(W)\\&\stackrel{(i)}{=}H(\tilde{X}|U_1^K)-H(\tilde{X})\leq 0.
\end{align*}
Thus, $\tilde{X}$ and $U_1^K$ are independent. Step (i) follows by the fact that $W$ and $\tilde{X}$ are uniformly distributed over $\{1,\ldots,|\mathcal{X}|\}$, i.e., $H(W)=H(\tilde{X})$. 
	\bibliographystyle{IEEEtran}
	\bibliography{IEEEabrv,IZS}

\begin{thebibliography}{10}
\providecommand{\url}[1]{#1}
\csname url@samestyle\endcsname
\providecommand{\newblock}{\relax}
\providecommand{\bibinfo}[2]{#2}
\providecommand{\BIBentrySTDinterwordspacing}{\spaceskip=0pt\relax}
\providecommand{\BIBentryALTinterwordstretchfactor}{4}
\providecommand{\BIBentryALTinterwordspacing}{\spaceskip=\fontdimen2\font plus
\BIBentryALTinterwordstretchfactor\fontdimen3\font minus
  \fontdimen4\font\relax}
\providecommand{\BIBforeignlanguage}[2]{{%
\expandafter\ifx\csname l@#1\endcsname\relax
\typeout{** WARNING: IEEEtran.bst: No hyphenation pattern has been}%
\typeout{** loaded for the language `#1'. Using the pattern for}%
\typeout{** the default language instead.}%
\else
\language=\csname l@#1\endcsname
\fi
#2}}
\providecommand{\BIBdecl}{\relax}
\BIBdecl

\bibitem{rassoul1}
B.~Rassouli and D.~G\"{u}nd\"{u}z, ``On perfect privacy and maximal
  correlation,'' \emph{arXiv preprint arXiv:1712.08500}, 2017.

\bibitem{makhdoumi}
A.~Makhdoumi, S.~Salamatian, N.~Fawaz, and M.~M{\'e}dard, ``From the
  information bottleneck to the privacy funnel,'' in \emph{2014 IEEE
  Information Theory Workshop (ITW 2014)}.\hskip 1em plus 0.5em minus
  0.4em\relax IEEE, 2014, pp. 501--505.

\bibitem{tishby}
N.~Tishby, F.~C. Pereira, and W.~Bialek, ``The information bottleneck method,''
  \emph{arXiv preprint physics/0004057}, 2000.

\bibitem{yamamoto}
H.~Yamamoto, ``A source coding problem for sources with additional outputs to
  keep secret from the receiver or wiretappers (corresp.),'' \emph{IEEE
  Transactions on Information Theory}, vol.~29, no.~6, pp. 918--923, 1983.

\bibitem{sankar}
L.~Sankar, S.~R. Rajagopalan, and H.~V. Poor, ``Utility-privacy tradeoffs in
  databases: An information-theoretic approach,'' \emph{IEEE Transactions on
  Information Forensics and Security}, vol.~8, no.~6, pp. 838--852, 2013.

\bibitem{dwork1}
C.~Dwork, F.~McSherry, K.~Nissim, and A.~Smith, ``Calibrating noise to
  sensitivity in private data analysis,'' in \emph{Theory of cryptography
  conference}.\hskip 1em plus 0.5em minus 0.4em\relax Springer, 2006, pp.
  265--284.

\bibitem{dwork2}
C.~Dwork, ``Differential privacy, in automata, languages and programming,''
  \emph{ser. Lecture Notes in Computer Scienc}, vol. 4052, p. 112, 2006.

\bibitem{oech}
Z.~Li, T.~J. Oechtering, and D.~G{\"u}nd{\"u}z, ``Privacy against a hypothesis
  testing adversary,'' \emph{IEEE Transactions on Information Forensics and
  Security}, vol.~14, no.~6, pp. 1567--1581, 2018.

\bibitem{borade}
S.~Borade and L.~Zheng, ``Euclidean information theory,'' in \emph{2008 IEEE
  International Zurich Seminar on Communications}.\hskip 1em plus 0.5em minus
  0.4em\relax IEEE, 2008, pp. 14--17.

\bibitem{huang}
S.-L. Huang and L.~Zheng, ``Linear information coupling problems,'' in
  \emph{2012 IEEE International Symposium on Information Theory
  Proceedings}.\hskip 1em plus 0.5em minus 0.4em\relax IEEE, 2012, pp.
  1029--1033.

\bibitem{huang2}
S.-L. Huang, C.~Suh, and L.~Zheng, ``Euclidean information theory of
  networks,'' \emph{IEEE Transactions on Information Theory}, vol.~61, no.~12,
  pp. 6795--6814, 2015.

\end{thebibliography}


\begin{thebibliography}{}
\providecommand{\url}[1]{#1}
\csname url@samestyle\endcsname
\providecommand{\newblock}{\relax}
\providecommand{\bibinfo}[2]{#2}
\providecommand{\BIBentrySTDinterwordspacing}{\spaceskip=0pt\relax}
\providecommand{\BIBentryALTinterwordstretchfactor}{4}
\providecommand{\BIBentryALTinterwordspacing}{\spaceskip=\fontdimen2\font plus
\BIBentryALTinterwordstretchfactor\fontdimen3\font minus
  \fontdimen4\font\relax}
\providecommand{\BIBforeignlanguage}[2]{{%
\expandafter\ifx\csname l@#1\endcsname\relax
\typeout{** WARNING: IEEEtran.bst: No hyphenation pattern has been}%
\typeout{** loaded for the language `#1'. Using the pattern for}%
\typeout{** the default language instead.}%
\else
\language=\csname l@#1\endcsname
\fi
#2}}
\providecommand{\BIBdecl}{\relax}
\BIBdecl

\end{thebibliography}


\begin{thebibliography}{10}
\providecommand{\url}[1]{#1}
\csname url@samestyle\endcsname
\providecommand{\newblock}{\relax}
\providecommand{\bibinfo}[2]{#2}
\providecommand{\BIBentrySTDinterwordspacing}{\spaceskip=0pt\relax}
\providecommand{\BIBentryALTinterwordstretchfactor}{4}
\providecommand{\BIBentryALTinterwordspacing}{\spaceskip=\fontdimen2\font plus
\BIBentryALTinterwordstretchfactor\fontdimen3\font minus
  \fontdimen4\font\relax}
\providecommand{\BIBforeignlanguage}[2]{{%
\expandafter\ifx\csname l@#1\endcsname\relax
\typeout{** WARNING: IEEEtran.bst: No hyphenation pattern has been}%
\typeout{** loaded for the language `#1'. Using the pattern for}%
\typeout{** the default language instead.}%
\else
\language=\csname l@#1\endcsname
\fi
#2}}
\providecommand{\BIBdecl}{\relax}
\BIBdecl

\bibitem{kostala}
Y.~Y. Shkel, R.~S. Blum, and H.~V. Poor, ``Secrecy by design with applications
  to privacy and compression,'' \emph{IEEE Transactions on Information Theory},
  vol.~67, no.~2, pp. 824--843, 2021.

\bibitem{kostala2}
Y.~Y. Shkel and H.~V. Poor, ``A compression perspective on secrecy measures,''
  \emph{IEEE Journal on Selected Areas in Information Theory}, vol.~2, no.~1,
  pp. 163--176, 2021.

\bibitem{shannon}
C.~E. Shannon, ``Communication theory of secrecy systems,'' \emph{The Bell
  System Technical Journal}, vol.~28, no.~4, pp. 656--715, 1949.

\bibitem{dworkal}
C.~Dwork, A.~Roth \emph{et~al.}, ``The algorithmic foundations of differential
  privacy,'' \emph{Foundations and Trends{\textregistered} in Theoretical
  Computer Science}, vol.~9, no. 3--4, pp. 211--407, 2014.

\bibitem{dwork1}
C.~Dwork, F.~McSherry, K.~Nissim, and A.~Smith, ``Calibrating noise to
  sensitivity in private data analysis,'' in \emph{Theory of cryptography
  conference}.\hskip 1em plus 0.5em minus 0.4em\relax Springer, 2006, pp.
  265--284.

\bibitem{gunduz2010source}
D.~G{\"u}nd{\"u}z, E.~Erkip, and H.~V. Poor, ``Source coding under secrecy
  constraints,'' \emph{Securing Wireless Communications at the Physical Layer},
  pp. 173--199, 2010.

\bibitem{schaefer}
R.~F. Schaefer, H.~Boche, A.~Khisti, and H.~V. Poor, \emph{Information
  theoretic security and privacy of information systems}.\hskip 1em plus 0.5em
  minus 0.4em\relax Cambridge University Press, 2017.

\bibitem{sankar}
L.~Sankar, S.~R. Rajagopalan, and H.~V. Poor, ``Utility-privacy tradeoffs in
  databases: An information-theoretic approach,'' \emph{IEEE Transactions on
  Information Forensics and Security}, vol.~8, no.~6, pp. 838--852, 2013.

\bibitem{yamamoto1988rate}
H.~Yamamoto, ``A rate-distortion problem for a communication system with a
  secondary decoder to be hindered,'' \emph{IEEE Transactions on Information
  Theory}, vol.~34, no.~4, pp. 835--842, 1988.

\bibitem{Calmon2}
H.~{Wang}, L.~{Vo}, F.~P. {Calmon}, M.~{M\'{e}dard}, K.~R. {Duffy}, and
  M.~{Varia}, ``Privacy with estimation guarantees,'' \emph{IEEE Transactions
  on Information Theory}, vol.~65, no.~12, pp. 8025--8042, Dec 2019.

\bibitem{yamamoto}
H.~Yamamoto, ``A source coding problem for sources with additional outputs to
  keep secret from the receiver or wiretappers (corresp.),'' \emph{IEEE
  Transactions on Information Theory}, vol.~29, no.~6, pp. 918--923, 1983.

\bibitem{issa}
I.~{Issa}, S.~{Kamath}, and A.~B. {Wagner}, ``An operational measure of
  information leakage,'' in \emph{2016 Annual Conference on Information Science
  and Systems}, March 2016, pp. 234--239.

\bibitem{makhdoumi}
A.~Makhdoumi, S.~Salamatian, N.~Fawaz, and M.~M{\'e}dard, ``From the
  information bottleneck to the privacy funnel,'' in \emph{2014 IEEE
  Information Theory Workshop}, 2014, pp. 501--505.

\bibitem{borz}
B.~{Rassouli} and D.~{G\"{u}nd\"{u}z}, ``On perfect privacy,'' \emph{IEEE
  Journal on Selected Areas in Information Theory}, vol.~2, no.~1, pp.
  177--191, 2021.

\bibitem{khodam}
A.~Zamani, T.~J. Oechtering, and M.~Skoglund, ``A design framework for strongly
  $\chi^2$-private data disclosure,'' \emph{IEEE Transactions on Information
  Forensics and Security}, vol.~16, pp. 2312--2325, 2021.

\bibitem{Khodam22}
{A. Zamani, T. J. Oechtering, and M. Skoglund}, ``Data disclosure with non-zero
  leakage and non-invertible leakage matrix,'' \emph{IEEE Transactions on
  Information Forensics and Security}, vol.~17, pp. 165--179, 2022.

\bibitem{calmon4}
F.~P. {Calmon}, A.~{Makhdoumi}, M.~{Medard}, M.~{Varia}, M.~{Christiansen}, and
  K.~R. {Duffy}, ``Principal inertia components and applications,'' \emph{IEEE
  Transactions on Information Theory}, vol.~63, no.~8, pp. 5011--5038, Aug
  2017.

\bibitem{issajoon}
I.~Issa, A.~B. Wagner, and S.~Kamath, ``An operational approach to information
  leakage,'' \emph{IEEE Transactions on Information Theory}, vol.~66, no.~3,
  pp. 1625--1657, 2020.

\bibitem{asoo}
S.~Asoodeh, M.~Diaz, F.~Alajaji, and T.~Linder, ``Estimation efficiency under
  privacy constraints,'' \emph{IEEE Transactions on Information Theory},
  vol.~65, no.~3, pp. 1512--1534, 2019.

\bibitem{Total}
B.~Rassouli and D.~{G\"{u}nd\"{u}z}, ``Optimal utility-privacy trade-off with
  total variation distance as a privacy measure,'' \emph{IEEE Transactions on
  Information Forensics and Security}, vol.~15, pp. 594--603, 2020.

\bibitem{issa2}
I.~Issa, S.~Kamath, and A.~B. Wagner, ``Maximal leakage minimization for the
  shannon cipher system,'' in \emph{2016 IEEE International Symposium on
  Information Theory}, 2016, pp. 520--524.

\bibitem{king1}
A.~Zamani, T.~J. Oechtering, and M.~Skoglund, ``Bounds for privacy-utility
  trade-off with non-zero leakage,'' in \emph{2022 IEEE International Symposium
  on Information Theory (ISIT)}, 2022, pp. 620--625.

\bibitem{king2}
------, ``Bounds for privacy-utility trade-off with per-letter privacy
  constraints and non-zero leakage,'' in \emph{2022 IEEE Information Theory
  Workshop (ITW)}, 2022, pp. 13--18.

\bibitem{kosenaz}
E.~Erdemir, P.~L. Dragotti, and D.~G{\"u}nd{\"u}z, ``Active privacy-utility
  trade-off against inference in time-series data sharing,'' \emph{arXiv
  preprint arXiv:2202.05833}, 2022.

\bibitem{9457633}
T.-Y. Liu and I.-H. Wang, ``Privacy-utility tradeoff with nonspecific tasks:
  Robust privatization and minimum leakage,'' in \emph{2020 IEEE Information
  Theory Workshop (ITW)}, 2021, pp. 1--5.

\bibitem{asoodeh1}
S.~Asoodeh, M.~Diaz, F.~Alajaji, and T.~Linder, ``Information extraction under
  privacy constraints,'' \emph{Information}, vol.~7, no.~1, p.~15, 2016.

\bibitem{bassi}
G.~Bassi, M.~Skoglund, and P.~Piantanida, ``Lossy communication subject to
  statistical parameter privacy,'' in \emph{2018 IEEE International Symposium
  on Information Theory (ISIT)}, 2018, pp. 1031--1035.

\bibitem{king3}
A.~Zamani, T.~J. Oechtering, and M.~Skoglund, ``On the privacy-utility
  trade-off with and without direct access to the private data,'' \emph{arXiv
  preprint arXiv:2212.12475}, 2022.

\bibitem{deniz3}
B.~{Rassouli}, F.~E. {Rosas}, and D.~{G\"{u}nd\"{u}z}, ``Data disclosure under
  perfect sample privacy,'' \emph{IEEE Transactions on Information Forensics
  and Security}, pp. 1--1, 2019.

\bibitem{wyner}
A.~D. {Wyner}, ``The wire-tap channel,'' \emph{The Bell System Technical
  Journal}, vol.~54, no.~8, pp. 1355--1387, 1975.

\bibitem{gacs1973common}
P.~G{\'a}cs and J.~K{\"o}rner, ``Common information is far less than mutual
  information,'' \emph{Problems of Control and Information Theory}, vol.~2,
  no.~2, pp. 149--162, 1973.

\bibitem{7888175}
Y.~O. {Basciftci}, Y.~{Wang}, and P.~{Ishwar}, ``On privacy-utility tradeoffs
  for constrained data release mechanisms,'' in \emph{2016 Information Theory
  and Applications Workshop}, Jan 2016, pp. 1--6.

\bibitem{kosnane}
C.~T. Li and A.~El~Gamal, ``Strong functional representation lemma and
  applications to coding theorems,'' \emph{IEEE Transactions on Information
  Theory}, vol.~64, no.~11, pp. 6967--6978, 2018.

\end{thebibliography}


\begin{thebibliography}{10}
\providecommand{\url}[1]{#1}
\csname url@samestyle\endcsname
\providecommand{\newblock}{\relax}
\providecommand{\bibinfo}[2]{#2}
\providecommand{\BIBentrySTDinterwordspacing}{\spaceskip=0pt\relax}
\providecommand{\BIBentryALTinterwordstretchfactor}{4}
\providecommand{\BIBentryALTinterwordspacing}{\spaceskip=\fontdimen2\font plus
\BIBentryALTinterwordstretchfactor\fontdimen3\font minus
  \fontdimen4\font\relax}
\providecommand{\BIBforeignlanguage}[2]{{%
\expandafter\ifx\csname l@#1\endcsname\relax
\typeout{** WARNING: IEEEtran.bst: No hyphenation pattern has been}%
\typeout{** loaded for the language `#1'. Using the pattern for}%
\typeout{** the default language instead.}%
\else
\language=\csname l@#1\endcsname
\fi
#2}}
\providecommand{\BIBdecl}{\relax}
\BIBdecl

\bibitem{king1}
A.~Zamani, T.~J. Oechtering, and M.~Skoglund, ``Bounds for privacy-utility
  trade-off with non-zero leakage,'' in \emph{2022 IEEE International Symposium
  on Information Theory (ISIT)}, 2022, pp. 620--625.

\bibitem{kostala}
Y.~Y. Shkel, R.~S. Blum, and H.~V. Poor, ``Secrecy by design with applications
  to privacy and compression,'' \emph{IEEE Transactions on Information Theory},
  vol.~67, no.~2, pp. 824--843, 2021.

\bibitem{shannon}
C.~E. Shannon, ``Communication theory of secrecy systems,'' \emph{The Bell
  System Technical Journal}, vol.~28, no.~4, pp. 656--715, 1949.

\bibitem{denizjadid2}
D.~G{\"u}nd{\"u}z, E.~Erkip, and H.~V. Poor, ``Lossless compression with
  security constraints,'' in \emph{2008 IEEE International Symposium on
  Information Theory}, 2008, pp. 111--115.

\bibitem{gunduz2010source}
------, ``Source coding under secrecy constraints,'' \emph{Securing Wireless
  Communications at the Physical Layer}, pp. 173--199, 2010.

\bibitem{schaefer}
R.~F. Schaefer, H.~Boche, A.~Khisti, and H.~V. Poor, \emph{Information
  theoretic security and privacy of information systems}.\hskip 1em plus 0.5em
  minus 0.4em\relax Cambridge University Press, 2017.

\bibitem{sankar}
L.~Sankar, S.~R. Rajagopalan, and H.~V. Poor, ``Utility-privacy tradeoffs in
  databases: An information-theoretic approach,'' \emph{IEEE Transactions on
  Information Forensics and Security}, vol.~8, no.~6, pp. 838--852, 2013.

\bibitem{yamamoto1988rate}
H.~Yamamoto, ``A rate-distortion problem for a communication system with a
  secondary decoder to be hindered,'' \emph{IEEE Transactions on Information
  Theory}, vol.~34, no.~4, pp. 835--842, 1988.

\bibitem{yeye}
D.~G{\"u}nd{\"u}z, E.~Erkip, and H.~V. Poor, ``Secure lossless compression with
  side information,'' in \emph{2008 IEEE Information Theory Workshop}, 2008,
  pp. 169--173.

\bibitem{yaya}
D.~Gunduz, E.~Erkip, and H.~V. Poor, ``Lossless compression with security
  constraints,'' in \emph{2008 IEEE International Symposium on Information
  Theory}, 2008, pp. 111--115.

\bibitem{makhdoumi}
A.~Makhdoumi, S.~Salamatian, N.~Fawaz, and M.~M{\'e}dard, ``From the
  information bottleneck to the privacy funnel,'' in \emph{2014 IEEE
  Information Theory Workshop}, 2014, pp. 501--505.

\bibitem{cufff}
C.~Schieler and P.~Cuff, ``Rate-distortion theory for secrecy systems,''
  \emph{IEEE Transactions on Information Theory}, vol.~60, no.~12, pp.
  7584--7605, 2014.

\bibitem{borz}
B.~{Rassouli} and D.~{G\"{u}nd\"{u}z}, ``On perfect privacy,'' \emph{IEEE
  Journal on Selected Areas in Information Theory}, vol.~2, no.~1, pp.
  177--191, 2021.

\bibitem{khodam}
A.~Zamani, T.~J. Oechtering, and M.~Skoglund, ``A design framework for strongly
  $\chi^2$-private data disclosure,'' \emph{IEEE Transactions on Information
  Forensics and Security}, vol.~16, pp. 2312--2325, 2021.

\bibitem{Khodam22}
{A. Zamani, T. J. Oechtering, and M. Skoglund}, ``Data disclosure with non-zero
  leakage and non-invertible leakage matrix,'' \emph{IEEE Transactions on
  Information Forensics and Security}, vol.~17, pp. 165--179, 2022.

\bibitem{kostala2}
Y.~Y. Shkel and H.~V. Poor, ``A compression perspective on secrecy measures,''
  \emph{IEEE Journal on Selected Areas in Information Theory}, vol.~2, no.~1,
  pp. 163--176, 2021.

\bibitem{calmon4}
F.~P. {Calmon}, A.~{Makhdoumi}, M.~{Medard}, M.~{Varia}, M.~{Christiansen}, and
  K.~R. {Duffy}, ``Principal inertia components and applications,'' \emph{IEEE
  Transactions on Information Theory}, vol.~63, no.~8, pp. 5011--5038, Aug
  2017.

\bibitem{asoo}
S.~Asoodeh, M.~Diaz, F.~Alajaji, and T.~Linder, ``Estimation efficiency under
  privacy constraints,'' \emph{IEEE Transactions on Information Theory},
  vol.~65, no.~3, pp. 1512--1534, 2019.

\bibitem{issa2}
I.~Issa, S.~Kamath, and A.~B. Wagner, ``Maximal leakage minimization for the
  shannon cipher system,'' in \emph{2016 IEEE International Symposium on
  Information Theory}, 2016, pp. 520--524.

\bibitem{king2}
A.~Zamani, T.~J. Oechtering, and M.~Skoglund, ``Bounds for privacy-utility
  trade-off with per-letter privacy constraints and non-zero leakage,'' in
  \emph{2022 IEEE Information Theory Workshop (ITW)}, 2022, pp. 13--18.

\bibitem{zamani2023cache}
A.~Zamani, T.~J. Oechtering, D.~G{\"u}nd{\"u}z, and M.~Skoglund, ``Cache-aided
  private variable-length coding with zero and non-zero leakage,'' \emph{arXiv
  preprint arXiv:2306.13184}, 2023.

\bibitem{asoodeh1}
S.~Asoodeh, M.~Diaz, F.~Alajaji, and T.~Linder, ``Information extraction under
  privacy constraints,'' \emph{Information}, vol.~7, no.~1, p.~15, 2016.

\bibitem{king3}
A.~Zamani, T.~J. Oechtering, and M.~Skoglund, ``On the privacy-utility
  trade-off with and without direct access to the private data,'' \emph{arXiv
  preprint arXiv:2212.12475}, 2022.

\bibitem{deniz3}
B.~{Rassouli}, F.~E. {Rosas}, and D.~{G\"{u}nd\"{u}z}, ``Data disclosure under
  perfect sample privacy,'' \emph{IEEE Transactions on Information Forensics
  and Security}, pp. 1--1, 2019.

\bibitem{maddah1}
M.~A. Maddah-Ali and U.~Niesen, ``Fundamental limits of caching,'' \emph{IEEE
  Transactions on Information Theory}, vol.~60, no.~5, pp. 2856--2867, 2014.

\end{thebibliography}


\begin{thebibliography}{10}
\providecommand{\url}[1]{#1}
\csname url@samestyle\endcsname
\providecommand{\newblock}{\relax}
\providecommand{\bibinfo}[2]{#2}
\providecommand{\BIBentrySTDinterwordspacing}{\spaceskip=0pt\relax}
\providecommand{\BIBentryALTinterwordstretchfactor}{4}
\providecommand{\BIBentryALTinterwordspacing}{\spaceskip=\fontdimen2\font plus
\BIBentryALTinterwordstretchfactor\fontdimen3\font minus
  \fontdimen4\font\relax}
\providecommand{\BIBforeignlanguage}[2]{{%
\expandafter\ifx\csname l@#1\endcsname\relax
\typeout{** WARNING: IEEEtran.bst: No hyphenation pattern has been}%
\typeout{** loaded for the language `#1'. Using the pattern for}%
\typeout{** the default language instead.}%
\else
\language=\csname l@#1\endcsname
\fi
#2}}
\providecommand{\BIBdecl}{\relax}
\BIBdecl

\bibitem{Calmon2}
H.~{Wang}, L.~{Vo}, F.~P. {Calmon}, M.~{M\'{e}dard}, K.~R. {Duffy}, and
  M.~{Varia}, ``Privacy with estimation guarantees,'' \emph{IEEE Transactions
  on Information Theory}, vol.~65, no.~12, pp. 8025--8042, Dec 2019.

\bibitem{yamamoto}
H.~Yamamoto, ``A source coding problem for sources with additional outputs to
  keep secret from the receiver or wiretappers (corresp.),'' \emph{IEEE
  Transactions on Information Theory}, vol.~29, no.~6, pp. 918--923, 1983.

\bibitem{sankar}
L.~Sankar, S.~R. Rajagopalan, and H.~V. Poor, ``Utility-privacy tradeoffs in
  databases: An information-theoretic approach,'' \emph{IEEE Transactions on
  Information Forensics and Security}, vol.~8, no.~6, pp. 838--852, 2013.

\bibitem{borz}
B.~{Rassouli} and D.~{G\"{u}nd\"{u}z}, ``On perfect privacy,'' \emph{IEEE
  Journal on Selected Areas in Information Theory}, vol.~2, no.~1, pp.
  177--191, 2021.

\bibitem{gun}
S.~{Sreekumar} and D.~{G\"{u}nd\"{u}z}, ``Optimal privacy-utility trade-off
  under a rate constraint,'' in \emph{2019 IEEE International Symposium on
  Information Theory}, July 2019, pp. 2159--2163.

\bibitem{khodam}
A.~Zamani, T.~J. Oechtering, and M.~Skoglund, ``A design framework for strongly
  $\chi^2$-private data disclosure,'' \emph{IEEE Transactions on Information
  Forensics and Security}, vol.~16, pp. 2312--2325, 2021.

\bibitem{Khodam22}
{A. Zamani, T. J. Oechtering, and M. Skoglund}, ``Data disclosure with non-zero
  leakage and non-invertible leakage matrix,'' \emph{IEEE Transactions on
  Information Forensics and Security}, vol.~17, pp. 165--179, 2022.

\bibitem{kostala}
Y.~Y. Shkel, R.~S. Blum, and H.~V. Poor, ``Secrecy by design with applications
  to privacy and compression,'' \emph{IEEE Transactions on Information Theory},
  vol.~67, no.~2, pp. 824--843, 2021.

\bibitem{king1}
A.~Zamani, T.~J. Oechtering, and M.~Skoglund, ``Bounds for privacy-utility
  trade-off with non-zero leakage,'' in \emph{2022 IEEE International Symposium
  on Information Theory (ISIT)}, 2022, pp. 620--625.

\bibitem{issa}
I.~{Issa}, S.~{Kamath}, and A.~B. {Wagner}, ``An operational measure of
  information leakage,'' in \emph{2016 Annual Conference on Information Science
  and Systems}, March 2016, pp. 234--239.

\bibitem{makhdoumi}
A.~Makhdoumi, S.~Salamatian, N.~Fawaz, and M.~M{\'e}dard, ``From the
  information bottleneck to the privacy funnel,'' in \emph{2014 IEEE
  Information Theory Workshop}, 2014, pp. 501--505.

\bibitem{dwork1}
C.~Dwork, F.~McSherry, K.~Nissim, and A.~Smith, ``Calibrating noise to
  sensitivity in private data analysis,'' in \emph{Theory of cryptography
  conference}.\hskip 1em plus 0.5em minus 0.4em\relax Springer, 2006, pp.
  265--284.

\bibitem{calmon4}
F.~P. {Calmon}, A.~{Makhdoumi}, M.~{Medard}, M.~{Varia}, M.~{Christiansen}, and
  K.~R. {Duffy}, ``Principal inertia components and applications,'' \emph{IEEE
  Transactions on Information Theory}, vol.~63, no.~8, pp. 5011--5038, Aug
  2017.

\bibitem{issajoon}
I.~Issa, A.~B. Wagner, and S.~Kamath, ``An operational approach to information
  leakage,'' \emph{IEEE Transactions on Information Theory}, vol.~66, no.~3,
  pp. 1625--1657, 2020.

\bibitem{asoo}
S.~Asoodeh, M.~Diaz, F.~Alajaji, and T.~Linder, ``Estimation efficiency under
  privacy constraints,'' \emph{IEEE Transactions on Information Theory},
  vol.~65, no.~3, pp. 1512--1534, 2019.

\bibitem{Total}
B.~Rassouli and D.~{G\"{u}nd\"{u}z}, ``Optimal utility-privacy trade-off with
  total variation distance as a privacy measure,'' \emph{IEEE Transactions on
  Information Forensics and Security}, vol.~15, pp. 594--603, 2020.

\bibitem{deniz3}
B.~{Rassouli}, F.~E. {Rosas}, and D.~{G\"{u}nd\"{u}z}, ``Data disclosure under
  perfect sample privacy,'' \emph{IEEE Transactions on Information Forensics
  and Security}, pp. 1--1, 2019.

\bibitem{issa2}
I.~Issa, S.~Kamath, and A.~B. Wagner, ``Maximal leakage minimization for the
  shannon cipher system,'' in \emph{2016 IEEE International Symposium on
  Information Theory}, 2016, pp. 520--524.

\bibitem{shahab}
\BIBentryALTinterwordspacing
S.~Asoodeh, M.~Diaz, F.~Alajaji, and T.~Linder, ``Information extraction under
  privacy constraints,'' \emph{Information}, vol.~7, no.~1, 2016. [Online].
  Available: \url{https://www.mdpi.com/2078-2489/7/1/15}
\BIBentrySTDinterwordspacing

\bibitem{kosnane}
C.~T. Li and A.~El~Gamal, ``Strong functional representation lemma and
  applications to coding theorems,'' \emph{IEEE Transactions on Information
  Theory}, vol.~64, no.~11, pp. 6967--6978, 2018.

\bibitem{warner1965randomized}
S.~L. Warner, ``Randomized response: A survey technique for eliminating evasive
  answer bias,'' \emph{Journal of the American Statistical Association},
  vol.~60, no. 309, pp. 63--69, 1965.

\end{thebibliography}


\begin{thebibliography}{10}
\providecommand{\url}[1]{#1}
\csname url@samestyle\endcsname
\providecommand{\newblock}{\relax}
\providecommand{\bibinfo}[2]{#2}
\providecommand{\BIBentrySTDinterwordspacing}{\spaceskip=0pt\relax}
\providecommand{\BIBentryALTinterwordstretchfactor}{4}
\providecommand{\BIBentryALTinterwordspacing}{\spaceskip=\fontdimen2\font plus
\BIBentryALTinterwordstretchfactor\fontdimen3\font minus
  \fontdimen4\font\relax}
\providecommand{\BIBforeignlanguage}[2]{{%
\expandafter\ifx\csname l@#1\endcsname\relax
\typeout{** WARNING: IEEEtran.bst: No hyphenation pattern has been}%
\typeout{** loaded for the language `#1'. Using the pattern for}%
\typeout{** the default language instead.}%
\else
\language=\csname l@#1\endcsname
\fi
#2}}
\providecommand{\BIBdecl}{\relax}
\BIBdecl

\bibitem{maddah1}
M.~A. Maddah-Ali and U.~Niesen, ``Fundamental limits of caching,'' \emph{IEEE
  Transactions on Information Theory}, vol.~60, no.~5, pp. 2856--2867, 2014.

\bibitem{kostala}
Y.~Y. Shkel, R.~S. Blum, and H.~V. Poor, ``Secrecy by design with applications
  to privacy and compression,'' \emph{IEEE Transactions on Information Theory},
  vol.~67, no.~2, pp. 824--843, 2021.

\bibitem{lim}
S.~H. Lim, C.-Y. Wang, and M.~Gastpar, ``Information-theoretic caching: The
  multi-user case,'' \emph{IEEE Transactions on Information Theory}, vol.~63,
  no.~11, pp. 7018--7037, 2017.

\bibitem{wang33}
C.-Y. Wang, S.~H. Lim, and M.~Gastpar, ``A new converse bound for coded
  caching,'' in \emph{Information Theory and Applications Workshop, 2016},
  2016, pp. 1--6.

\bibitem{lu}
Q.~Yu, M.~A. Maddah-Ali, and A.~S. Avestimehr, ``The exact rate-memory tradeoff
  for caching with uncoded prefetching,'' \emph{IEEE Transactions on
  Information Theory}, 2017.

\bibitem{denizjadid}
Q.~Yang and D.~G{\"u}nd{\"u}z, ``Coded caching and content delivery with
  heterogeneous distortion requirements,'' \emph{IEEE Transactions on
  Information Theory}, vol.~64, no.~6, pp. 4347--4364, 2018.

\bibitem{shannon}
C.~E. Shannon, ``Communication theory of secrecy systems,'' \emph{The Bell
  System Technical Journal}, vol.~28, no.~4, pp. 656--715, 1949.

\bibitem{dworkal}
C.~Dwork, A.~Roth \emph{et~al.}, ``The algorithmic foundations of differential
  privacy,'' \emph{Foundations and Trends{\textregistered} in Theoretical
  Computer Science}, vol.~9, no. 3--4, pp. 211--407, 2014.

\bibitem{dwork1}
C.~Dwork, F.~McSherry, K.~Nissim, and A.~Smith, ``Calibrating noise to
  sensitivity in private data analysis,'' in \emph{Theory of cryptography
  conference}.\hskip 1em plus 0.5em minus 0.4em\relax Springer, 2006, pp.
  265--284.

\bibitem{denizjadid2}
D.~G{\"u}nd{\"u}z, E.~Erkip, and H.~V. Poor, ``Lossless compression with
  security constraints,'' in \emph{2008 IEEE International Symposium on
  Information Theory}, 2008, pp. 111--115.

\bibitem{gunduz2010source}
------, ``Source coding under secrecy constraints,'' \emph{Securing Wireless
  Communications at the Physical Layer}, pp. 173--199, 2010.

\bibitem{schaefer}
R.~F. Schaefer, H.~Boche, A.~Khisti, and H.~V. Poor, \emph{Information
  theoretic security and privacy of information systems}.\hskip 1em plus 0.5em
  minus 0.4em\relax Cambridge University Press, 2017.

\bibitem{sankar}
L.~Sankar, S.~R. Rajagopalan, and H.~V. Poor, ``Utility-privacy tradeoffs in
  databases: An information-theoretic approach,'' \emph{IEEE Transactions on
  Information Forensics and Security}, vol.~8, no.~6, pp. 838--852, 2013.

\bibitem{yamamoto1988rate}
H.~Yamamoto, ``A rate-distortion problem for a communication system with a
  secondary decoder to be hindered,'' \emph{IEEE Transactions on Information
  Theory}, vol.~34, no.~4, pp. 835--842, 1988.

\bibitem{yamamoto}
------, ``A source coding problem for sources with additional outputs to keep
  secret from the receiver or wiretappers (corresp.),'' \emph{IEEE Transactions
  on Information Theory}, vol.~29, no.~6, pp. 918--923, 1983.

\bibitem{cufff}
C.~Schieler and P.~Cuff, ``Rate-distortion theory for secrecy systems,''
  \emph{IEEE Transactions on Information Theory}, vol.~60, no.~12, pp.
  7584--7605, 2014.

\bibitem{borz}
B.~{Rassouli} and D.~{G\"{u}nd\"{u}z}, ``On perfect privacy,'' \emph{IEEE
  Journal on Selected Areas in Information Theory}, vol.~2, no.~1, pp.
  177--191, 2021.

\bibitem{khodam}
A.~Zamani, T.~J. Oechtering, and M.~Skoglund, ``A design framework for strongly
  $\chi^2$-private data disclosure,'' \emph{IEEE Transactions on Information
  Forensics and Security}, vol.~16, pp. 2312--2325, 2021.

\bibitem{Khodam22}
{A. Zamani, T. J. Oechtering, and M. Skoglund}, ``Data disclosure with non-zero
  leakage and non-invertible leakage matrix,'' \emph{IEEE Transactions on
  Information Forensics and Security}, vol.~17, pp. 165--179, 2022.

\bibitem{kostala2}
Y.~Y. Shkel and H.~V. Poor, ``A compression perspective on secrecy measures,''
  \emph{IEEE Journal on Selected Areas in Information Theory}, vol.~2, no.~1,
  pp. 163--176, 2021.

\bibitem{calmon4}
F.~P. {Calmon}, A.~{Makhdoumi}, M.~{Medard}, M.~{Varia}, M.~{Christiansen}, and
  K.~R. {Duffy}, ``Principal inertia components and applications,'' \emph{IEEE
  Transactions on Information Theory}, vol.~63, no.~8, pp. 5011--5038, Aug
  2017.

\bibitem{asoo}
S.~Asoodeh, M.~Diaz, F.~Alajaji, and T.~Linder, ``Estimation efficiency under
  privacy constraints,'' \emph{IEEE Transactions on Information Theory},
  vol.~65, no.~3, pp. 1512--1534, 2019.

\bibitem{issa2}
I.~Issa, S.~Kamath, and A.~B. Wagner, ``Maximal leakage minimization for the
  shannon cipher system,'' in \emph{2016 IEEE International Symposium on
  Information Theory}, 2016, pp. 520--524.

\bibitem{king1}
A.~Zamani, T.~J. Oechtering, and M.~Skoglund, ``Bounds for privacy-utility
  trade-off with non-zero leakage,'' in \emph{2022 IEEE International Symposium
  on Information Theory (ISIT)}, 2022, pp. 620--625.

\bibitem{king2}
------, ``Bounds for privacy-utility trade-off with per-letter privacy
  constraints and non-zero leakage,'' in \emph{2022 IEEE Information Theory
  Workshop (ITW)}, 2022, pp. 13--18.

\bibitem{asoodeh1}
S.~Asoodeh, M.~Diaz, F.~Alajaji, and T.~Linder, ``Information extraction under
  privacy constraints,'' \emph{Information}, vol.~7, no.~1, p.~15, 2016.

\bibitem{king3}
A.~Zamani, T.~J. Oechtering, and M.~Skoglund, ``On the privacy-utility
  trade-off with and without direct access to the private data,'' \emph{arXiv
  preprint arXiv:2212.12475}, 2022.

\bibitem{deniz3}
B.~{Rassouli}, F.~E. {Rosas}, and D.~{G\"{u}nd\"{u}z}, ``Data disclosure under
  perfect sample privacy,'' \emph{IEEE Transactions on Information Forensics
  and Security}, pp. 1--1, 2019.

\end{thebibliography}


\end{document}